\documentclass{ieeetj}
\usepackage{cite}
\usepackage{amsmath,amssymb,amsfonts,amsthm}
\usepackage{graphicx,color}
\usepackage{booktabs}
\usepackage{array}
\usepackage{xcolor}
\usepackage[hidelinks]{hyperref}

\newtheorem{theorem}{Theorem}
\newtheorem{lemma}{Lemma}
\newtheorem{corollary}{Corollary}
\newtheorem{definition}{Definition}
\newtheorem{remark}{Remark}
\newtheorem{assumption}{Assumption}

\newcommand{\Xb}{\overline{X}}
\newcommand{\E}{\mathbb{E}}
\newcommand{\V}{\mathbb{V}}
\newcommand{\prob}{\mathbb{P}}
\newcommand{\lab}[1]{\mbox{\normalfont\scriptsize\scshape #1}}

\AtBeginDocument{\definecolor{tmlcncolor}{cmyk}{0.93,0.59,0.15,0.02}%
  \definecolor{NavyBlue}{RGB}{0,86,125}}
\def\OJlogo{\vspace{-4pt}}
\def\seclogo{\vspace{10pt}}
\def\authorrefmark#1{\ensuremath{^{\textbf{#1}}}}

\begin{document}
\receiveddate{XX Month, XXXX}
\reviseddate{XX Month, XXXX}
\accepteddate{XX Month, XXXX}
\publisheddate{XX Month, XXXX}
\currentdate{XX Month, XXXX}
\doiinfo{XXXX.XXXX.XXXXXXX}

\renewcommand{\topfraction}{0.92}
\renewcommand{\bottomfraction}{0.7}
\renewcommand{\textfraction}{0.06}
\renewcommand{\floatpagefraction}{0.85}
\renewcommand{\dbltopfraction}{0.92}
\renewcommand{\dblfloatpagefraction}{0.85}
\setcounter{topnumber}{3}
\setcounter{dbltopnumber}{3}
\setcounter{totalnumber}{5}

\markboth{}{Li \emph{et al.}}

\title{Probabilistic Performance Analysis of Parallel Signature Search
Strategies in Multi-Level Tree Networks}

\author{Jingwei Li\authorrefmark{1} and Thomas G. Robertazzi\raisebox{0pt}[0pt][0pt]{\textsuperscript{\footnotesize 1,2}}, Fellow, IEEE}
\affil{Department of Electrical and Computer Engineering, Stony Brook University, 100 Nicolls Rd, Stony Brook, NY 11794, USA}
\affil{Department of Applied Mathematics and Statistics (affiliate), Stony Brook University, 100 Nicolls Rd, Stony Brook, NY 11794, USA}
\corresp{Corresponding author: Thomas G. Robertazzi (email: thomas.robertazzi@stonybrook.edu).}
\newif\ifpreprint
\preprinttrue

\authornote{This work is based in part on the first author's doctoral dissertation, Stony Brook University, 2022. This research did not receive any specific grant from funding agencies in the public, commercial, or not-for-profit sectors.\ifpreprint\ This work has been submitted to the IEEE for possible publication. Copyright may be transferred without notice, after which this version may no longer be accessible.\fi}

\begin{abstract}
Hierarchical distributed search --- locating a data pattern, or
\emph{signature}, across a tree-structured collection of files ---
underlies distributed index traversal, deep packet inspection and
sequence alignment. A practitioner must decide how much
parallelism to employ: scan each layer sequentially, fan out within
subtrees, or launch the whole tree at once. Existing analyses answer
this only partially: they characterize every node by the statistics
of a signature-holding file and, for multi-signature files, need
quantities revealed only at run time. We develop a probabilistic
framework predicting the completion time of five search strategies,
spanning sequential to full-tree parallelism, \emph{before any file is
read}. Node scan times are modeled
as a mixture over signature presence, layer times as order statistics,
and parallel subtree scans by extreme-value arguments; when signature
counts are known, occupancy under capacity constraints is treated by
generating functions. Each performance formula carries an exactness
label ---
exact (or exact-in-regime), plug-in, asymptotic or bound --- each
approximation quantified against Monte Carlo simulation and its regime
identified. A multicore prototype reproduces the coarse separation
between full-tree, layer- and subtree-level parallelism, but shows
that synchronization overhead can erase the predicted separation
between close strategies. The framework delivers \emph{a priori}
completion-time predictions with explicit accuracy regimes and
negligible computational cost, the design example evaluated in under a
millisecond; these timing models can support subsequent resource-cost
optimization.
\end{abstract}

\begin{IEEEkeywords}
Divisible load theory, signature searching, tree networks, order statistics,
performance analysis, parallel processing.
\end{IEEEkeywords}

\maketitle

\section{Introduction}\label{sec:intro}

A \emph{signature} is a data pattern of interest inside a large file or a
collection of files: a gene fragment in a DNA database, a byte pattern in
network traffic, a keyword in a document corpus, a feature in an image
archive. When the collection is distributed over a network of
processors, the time to locate all signatures depends jointly on the
network topology, the search strategy, and the statistics of signature
occurrence. Multi-level tree networks are a natural substrate for such
searches: work flows from a root to successively wider layers, matching
both the structure of hierarchical storage systems and the master--worker
pattern of divisible load scheduling.

This paper asks a planning question: \emph{before} a search is launched,
what is its expected completion time? An a priori answer lets an
operator compare candidate strategies, dimension a tree, or negotiate a
deadline without running the search. On elastic infrastructure the
question is sharper still: when processors are rented by the slot, how
widely to fan out is a cost decision as much as a latency one, and it
must be made before the workload reveals itself. Prior analyses of signature search
time in tree networks \cite{ying2014} answer this only partially: they
model every node with the statistics of a signature-holding node, and,
when files may hold several signatures, require quantities that only
become known during the search itself.

To the best of our knowledge, this is the first unified a priori
framework that jointly covers both information conditions, all three
capacity classes, and the full range of layer- and subtree-level
parallelism, with every expression traceable to a stated exactness
label. Our contributions are as follows.

First, a framework spanning two information conditions (signature
counts per layer unknown vs.\ known), three capacity classes (at most
one, at most $K$, unlimited) and five strategies from fully sequential
to fully parallel, each with an explicit analytical or numerically
evaluable expected-time characterization --- except for one
complementary capacity-constrained regime, which we identify and leave
open. What the unification requires matters more than the count of
settings: a mixture correction to the node-time model, explicit
conditioning on random layer widths, the occupancy law of subtree
counts, and computational routes for small and large parameters alike.

Second, a sharper probabilistic foundation: node times are modeled as a
mixture over signature presence, parallel stages via order statistics,
and parallel subtree scans via central-limit and extreme-value
arguments with the correct mixture variance. Third, for known counts we
show that per-subtree occupancies follow a multivariate hypergeometric
law --- not the uniform-composition law that stars-and-bars counting
suggests --- characterize the minimum subtree count through a
generating function, and solve the multi-signature case under
\emph{both} placement conventions in one framework. Fourth, the
principal formulas are validated by Monte Carlo simulation, and every
formula is classified as exact or approximate with a quantified
accuracy range, and a multicore prototype measures how far the
unbounded-processor idealization carries in real execution.

The paper is organized as follows. Section~\ref{sec:related} reviews
related work. Section~\ref{sec:model} defines the model.
Sections~\ref{sec:unknown} and \ref{sec:known} analyze the unknown- and
known-count conditions. Section~\ref{sec:validation} reports the
simulation study and Section~\ref{sec:design} a design example with the
computational cost of the analysis;
Section~\ref{sec:discussion} discusses scope and
limitations, and Section~\ref{sec:conclusion} concludes.

\section{Related Work}\label{sec:related}

\emph{Divisible load theory.} DLT studies the optimal partitioning of
arbitrarily divisible computation among networked processors
\cite{bharadwaj1996,robertazzi2003,bharadwaj2003,robertazzi-book}.
Closed-form and algorithmic results exist for buses, daisy chains,
stars and multi-level trees \cite{beaumont2005}, with continuing
extensions, e.g., to coarse-grained workloads
\cite{song2023}. Our work belongs to the performance-analysis branch of
this literature: rather than optimizing a load partition, we predict the
completion time of a fixed hierarchical search discipline.

\emph{Signature searching.} Ying and Robertazzi \cite{ying2014}
introduced the time-cost analysis of signature searching in a networked
collection of files, establishing the scenario taxonomy and the basic
expected-time arguments on which the present work is built. Their
analysis characterizes each node by the statistics of a
signature-holding file, which is exact when signatures are plentiful but increasingly
optimistic as $p$ decreases, underestimating the latency (Section~\ref{sec:validation}),
and it treats the single- and multi-signature cases separately, the
latter in terms of quantities revealed during the search. We refine
that foundation in four directions: node times are modeled as a mixture
over signature presence; a capacity class interpolating between one and
unlimited signatures is added; every setting is made computable strictly
in advance; and two subtree-oriented disciplines are introduced,
requiring central-limit and extreme-value arguments.
Table~\ref{tab:vs-prior} summarizes the differences dimension by
dimension.

\begin{table}[!t]
\caption{Scope of \cite{ying2014} and of this work}
\label{tab:vs-prior}
\centering
\small
\begin{tabular}{p{1.5cm}p{2.3cm}p{3.1cm}}
\toprule
Dimension & Prior work \cite{ying2014} & This work \\
\midrule
Node time & signature-node law &
  signature/non-signature mixture \\
Information & partly runtime & strictly a priori \\
Capacity & one, unlimited & one, $K$, unlimited (unified) \\
Strategies & layer-level & five, incl.\ subtree-level \\
Known counts & limited &
  occupancy law under two conventions \\
Validation & limited &
  Monte Carlo, error regimes, prototype \\
Resource view & --- & peak concurrency reported \\
\bottomrule
\end{tabular}
\end{table}

\emph{Distributed pattern search.} On the application side, the
primitive analyzed here appears wherever a pattern must be located
across a distributed corpus: regular-expression and string matching for
deep packet inspection and intrusion detection \cite{xu2016}, and
sequence alignment over clustered genomic data, where read alignment is
routinely distributed with data-parallel frameworks
\cite{dean2008,zaharia2012} such as Apache Spark
\cite{abuin2016,aljame2023}. Hierarchical summaries --- Bloom filters
(space-efficient probabilistic structures for testing set membership)
and related structures \cite{broder2004}, or the merged indexes of
Section~\ref{sec:model} --- are what make the pruning assumption
realizable in such systems. That literature optimizes the matcher run
at a node; we take the per-node scan as given and predict the
completion time of the search that coordinates the nodes.

\emph{Parallel completion-time modeling.} A layer of our tree is a
fork--join stage: work is split, executed concurrently, and joined, so
the stage time is the maximum of the branch times. Exact analysis is
hard beyond two branches, and the literature relies on approximations
and bounds \cite{nelson1988}, surveyed in \cite{thomasian2014}, with
recent extensions to heterogeneous servers \cite{mohanty2024} and to
bounded, heterogeneous processor sets \cite{wang2024} --- the setting
our prototype meets empirically. The same maximum-of-many effect appears operationally as the straggler
problem \cite{dean2013}, mitigated by cause-aware restart and
placement \cite{ananth2010} or by cloning \cite{ananth2013}. Our $S_4$ analysis is its analytical
counterpart: the layer time is an extreme value over subtrees, growing
like $\sqrt{\ln M}$ rather than with the mean. Unlike queueing fork--join models, our branches are not
service stations fed by arrivals but bounded scans of known length,
which is what makes closed forms attainable.

\emph{Combinatorial occupancy.} The known-count analysis draws on
classical occupancy theory \cite{johnson1977,feller1968}, whose
balls-into-bins asymptotics are standard tools for load distribution
\cite{raab1998}. Capacity-constrained counts of placements of
identical balls into distinct urns, used in
Section~\ref{sec:known}-\ref{sec:known-K}, are derived in our companion combinatorial
paper \cite{li2025urn}; the equal-composition vs.\ uniform-placement
distinction of Section~\ref{sec:discussion} mirrors the classical
Bose--Einstein vs.\ Maxwell--Boltzmann dichotomy \cite{feller1968}.

\section{System Model and Assumptions}\label{sec:model}

\subsection{Multi-Level Tree Networks}

We consider a rooted tree network of height $H$. The root (layer $0$)
acts as coordinator; every non-root node holds one searchable object,
which we call its file: at a leaf a \emph{shard} (one partition of a
horizontally split data set), internally a \emph{summary index} over
the shards below. A node in
layer $i-1$ that participates in the search has $n_i$ children in layer
$i$, $i=1,\dots,H$; all subtrees rooted in the same layer are therefore
homogeneous in size. The $n_i$ children of a common parent form a
\emph{subtree} of layer $i$. Derivations keep $n_i$ general; a bare
$n$ means $n_i=n$ for all $i$. A table of notation is given in the supplementary material.

\subsection{Search-Time Model}\label{sec:time}

Scanning
a file takes time $s$; we normalize the scan to proceed at unit speed, so
a signature located at position $u\in[0,s]$ is found at time $u$.
Signature positions are uniformly distributed, hence for a file that
contains a signature the discovery time is $U[0,s]$ with mean
$\Xb=s/2$. If a file contains no signature, the scan can only terminate
after the entire file has been read, taking the deterministic time
$s=2\Xb$.

\begin{assumption}[Signature inheritance]\label{as:inherit}
If a node holds no signature, none of its descendants holds a signature,
and its subtree is pruned from the search. If a node holds a signature,
each of its children holds a signature with probability $p$ at
capacity one, independently across children. Writing $K$ for the
maximum number of signatures a single file may hold
(Section~\ref{sec:model}-\ref{sec:classes}), $p$ is for $K>1$ the
per-slot occupancy and a child is nonempty with probability
$q=1-(1-p)^K$ (Remark~\ref{rem:p}).
\end{assumption}

\begin{remark}[Two readings of $p$]\label{rem:p}
At capacity one, $p$ is a \emph{node-level} presence probability. For
$K>1$ (Section~\ref{sec:unknown}-\ref{sec:unknown-K}) $p$ is the \emph{slot-level}
occupancy probability, each of the $K$ slots being filled independently,
so the node-level probability of holding at least one signature is
$q=1-(1-p)^{K}$, which reduces to $p$ at $K=1$.
\end{remark}

\begin{figure}[!t]
\centering
\includegraphics[width=0.82\columnwidth]{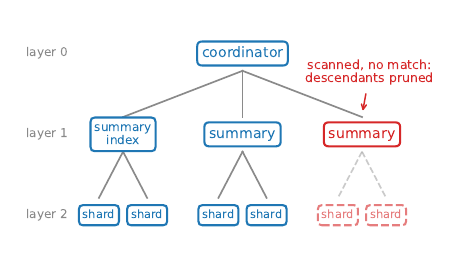}
\caption{Reference architecture, shown for $H=2$ with $n_1=3$ and
$n_2=2$. The root acts as query coordinator; internal nodes hold a
summary index over their subtree; leaves hold data shards (partitions
of the data set). A summary
is itself scanned; one ending without a match prunes its descendant
subtree (dashed).}
\label{fig:arch}
\end{figure}

Assumption~\ref{as:inherit}, from \cite{ying2014}, is not arbitrary
but the semantics of a summarized hierarchical index, the running
example of this paper (Figure~\ref{fig:arch}). Leaves store data
shards; an internal node stores a summary index over the shards below
--- the merged list of patterns occurring anywhere in its subtree, or
an equivalent digest --- which is itself scanned. Because the index is
complete, a scan that reaches its end without a match certifies that no
descendant can match and the subtree is skipped: exactly the pruning of
Assumption~\ref{as:inherit}. The same completeness fixes the two scan
times of Section~\ref{sec:model}-\ref{sec:time}. A node whose index
holds no match must be read in full before that conclusion is safe,
costing $s$; a node whose index does hold one stops at the matching
entry, at a position we take to be uniform, costing $U[0,s]$ in
expectation. Versioned storage and provenance graphs read the same way.

\subsection{Search Scenarios}

Layers are always processed one after another; the five scenarios differ
in how the nodes \emph{within} a layer are processed:

\begin{itemize}
  \item[$S_1$:] all nodes of the layer are searched sequentially;
  \item[$S_2$:] all nodes of the layer are searched in parallel;
  \item[$S_3$:] within each subtree (the $n_i$ children of one parent)
    the nodes are searched in parallel, subtrees sequentially;
  \item[$S_4$:] the nodes of each subtree are searched sequentially,
    subtrees in parallel;
  \item[$S_5$:] every node of the entire tree is searched in parallel
    (single stage, no layer synchronization).
\end{itemize}

\begin{remark}[Execution semantics of $S_5$]\label{rem:S5}
$S_1$--$S_4$ synchronize between layers, so a node is started only after
its parent is known to hold a signature and pruning saves work.
$S_5$ has no such synchronization point, and we define it as
\emph{speculative full-tree execution}: all $N=\sum_{i\le H}\prod_{j\le
i}n_j$ nodes are launched at $t=0$, and the scans of subtrees that
pruning would have eliminated are performed but discarded. Completion
time is thus the maximum over all launched nodes, and the processor
cost is charged for all $N$ of them --- the accounting used in
Section~\ref{sec:design}. $S_5$ consequently upper-bounds the
achievable parallelism of any dependency-respecting schedule and serves
as an idealized latency lower bound; its interest lies precisely in the
resource price of that bound.
\end{remark}

\subsection{Random Quantities and Conditioning}\label{sec:rv}

Because several quantities below are random, we fix their status once
and for all. \emph{Deterministic parameters}: $H$, $n_i$, $s$, $p$,
$K$, and --- in the known-count condition --- the signature counts
$m_i$. \emph{Random variables}: the per-node search time $t$; the
number $M_i$ (resp.\ $\widehat M_i$) of signature-holding nodes in layer
$i$ when files hold at most one (resp.\ at most $K$) signature; the
per-subtree occupancy vector; and every layer completion time. The
number of nodes searched in layer $i$ is $M_{i-1}n_i$ (resp.\
$\widehat M_{i-1}n_i$ in the capacity-$K$ case), itself random
for $i\ge 2$. Note that $M_i$ counts only the signature-holding nodes
among those $M_{i-1}n_i$: at capacity one $M_i\mid M_{i-1}\sim
\mathrm{Bin}(M_{i-1}n_i, p)$. \emph{Hat convention}: throughout, a hat
marks the capacity-$K$ counterpart of a capacity-one quantity ---
$\widehat M_i$, $\widehat T_j$, $\widehat\tau_i$ --- each reducing to
it at $K=1$.

Layer times are therefore analyzed \emph{conditionally} on the
\emph{width} of the preceding layer --- the number $M_{i-1}$ of its
signature-holding nodes, which fixes how many nodes layer $i$ searches
--- and the total is obtained by averaging over that width. Writing $\tau_i(\cdot)$ for the conditional expected time of
layer $i$ given the number of searched parents, the exact total is
\begin{equation}\label{eq:cond}
T=\sum_{i=1}^{H}\E\bigl[\tau_i(M_{i-1})\bigr]
 =\sum_{i=1}^{H}\sum_{k}\prob(M_{i-1}=k)\,\tau_i(k) .
\end{equation}
When $\tau_i$ is affine in $M_{i-1}$, \eqref{eq:cond} collapses to
$\tau_i(\E M_{i-1})$ and the result is exact; otherwise we evaluate
$\tau_i(\E M_{i-1})$ as a plug-in (mean-field) approximation and report
its error. Note that $\E M_{i-1}$ is generally not an integer, so
whenever $\tau_i$ involves combinatorial quantities defined only for
integer arguments, the plug-in is used with $\E M_{i-1}$ rounded to the
nearest integer; \eqref{eq:cond} remains available as the exact
alternative at higher cost.

\begin{remark}[Exactness labels]\label{rem:labels}
Every result below carries one of five labels:
\lab{exact} (rigorous in the stated model);
\lab{plug-in} (uses $\E g(M)\approx g(\E M)$);
\lab{clt/evt} (relies on asymptotic distributions);
\lab{bound} (systematic under- or over-estimate); and
\lab{exact-in-regime} (rigorous under a stated parameter condition).
Section~\ref{sec:validation} quantifies the error of every non-exact
label.
\end{remark}

\subsection{Information Conditions and Capacity Classes}\label{sec:classes}

Two information conditions are studied. In the \emph{unknown-count}
condition (Section~\ref{sec:unknown}), signature occurrence is only known
statistically through Assumption~\ref{as:inherit}. In the
\emph{known-count} condition (Section~\ref{sec:known}), the exact number
$m_i$ of signatures in each layer is known in advance, and the signatures
are placed at random over the searched nodes of that layer: uniformly
over node subsets when files hold at most one signature, and, for
capacity $K>1$, according to one of the two placement conventions
introduced in Section~\ref{sec:known}-\ref{sec:known-K}.

\begin{assumption}[No global cancellation]\label{as:nocancel}
Knowing $m_i$ tells the coordinator when a layer's signatures have all
been found, but we do not assume it can retract dispatched work: every
dispatched node completes its scan. Under this convention
$S_1$--$S_4$ differ only in how work is parallelized; $S_5$
additionally executes speculatively (Remark~\ref{rem:S5}), so its work
set is the full tree.
The supplementary material quantifies what cancellation would save for
$S_1$ and $S_2$, both in closed form, and why subtree-level
cancellation ($S_3$, $S_4$) does not yield a comparable closed form
under the present analysis.
\end{assumption}

Orthogonally, three capacity classes are considered: each file holds at
most one signature; at most $K$ signatures; or an unlimited number of
signatures. The results of this paper are \emph{a priori} computable:
they depend only on $(H, n_i, s)$ together with $p$ (and $K$), or with
$\{m_i\}$ --- quantities available before the search starts.

\section{Unknown Number of Signatures}\label{sec:unknown}

\subsection{At Most One Signature per File}\label{sec:unknown-one}

\subsubsection{Single-Node Statistics}

Under Assumption~\ref{as:inherit} a searched node holds a signature with
probability $p$. Its search time $t$ therefore follows the mixture
\begin{equation}\label{eq:mixture}
f(t) \;=\; p\,f_1(t) + (1-p)\,f_2(t),
\end{equation}
where $f_1$ is the density of $U[0,s]$ and $f_2$ is a unit point mass at
$t=s$.

\begin{lemma}\label{lem:node}
With $\Xb=s/2$, the search time of a searched node satisfies
\begin{align}
\E(t) &= p\,\frac{s}{2} + (1-p)\,s = (2-p)\Xb, \label{eq:Et}\\
\V(t) &= \E(t^2)-\E(t)^2
       = \frac{s^2 p\,(4-3p)}{12}
       = \frac{p\,(4-3p)}{3}\,\Xb^{2}. \label{eq:Vt}
\end{align}
\end{lemma}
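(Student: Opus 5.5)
The plan is to compute the first two moments of $t$ directly from the mixture \eqref{eq:mixture}, using linearity of expectation over its two components, and then form the variance as $\E(t^2)-\E(t)^2$. Equivalently, introduce the Bernoulli indicator $B$ of signature presence, with $\prob(B=1)=p$. Conditionally on $B=1$ we have $t\sim U[0,s]$, and conditionally on $B=0$ we have $t=s$ almost surely. The law of total expectation then gives every moment of $t$ as a $p$-weighted combination of the moments of the two components.

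For the mean, the uniform component contributes $\E(t\mid B=1)=s/2$ and the point mass contributes $s$. Hence
\[
\E(t)=p\,\frac{s}{2}+(1-p)\,s ,
\]
and substituting $s=2\Xb$ yields $(2-p)\Xb$, which is \eqref{eq:Et}.

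For the second moment, $\E(t^2\mid B=1)=\int_0^s u^2\,du/s=s^2/3$ and $\E(t^2\mid B=0)=s^2$. Therefore
\[
\E(t^2)=p\,\frac{s^2}{3}+(1-p)\,s^2=s^2\Bigl(1-\frac{2p}{3}\Bigr).
\]
Subtracting $\E(t)^2=s^2(1-p/2)^2=s^2(1-p+p^2/4)$ leaves
\[
s^2\Bigl(\frac{p}{3}-\frac{p^2}{4}\Bigr)=\frac{s^2p(4-3p)}{12}.
\]
Replacing $s^2$ by $4\Xb^2$ gives the second form in \eqref{eq:Vt}.

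As a cross-check, the law of total variance can be used instead. The within-component term is $\E\,\V(t\mid B)=p\,s^2/12$, and the between-component term is $\V\,\E(t\mid B)=p(1-p)(s/2)^2$. Their sum is $s^2(p+3p-3p^2)/12$, which agrees with the direct computation. There is no real obstacle here. The only point needing care is the algebra when collecting the $p$ and $p^2$ terms, which the two independent routes confirm.
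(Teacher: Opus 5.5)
Your proposal is correct and follows the paper's proof: you take the mixture mean, compute $\E(t^2)=\bigl(\tfrac{p}{3}+1-p\bigr)s^2$ from the two components, and subtract $\E(t)^2$ with $\E(t)=(1-\tfrac{p}{2})s$. Your law-of-total-variance cross-check is a nice addition, since its between-component term $p(1-p)s^2/4$ is exactly the extra dispersion that the remark following the lemma points out.
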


\begin{proof}
The mixture mean is the probability-weighted mean,
giving \eqref{eq:Et}. For the second moment,
$\E(t^2)=p\!\int_0^s \frac{u^2}{s}\,du+(1-p)s^2
       =\bigl(\tfrac{p}{3}+1-p\bigr)s^2$,
and \eqref{eq:Vt} follows from
$\V(t)=\E(t^2)-\E(t)^2$ with $\E(t)=(1-\tfrac{p}{2})s$.
\end{proof}

\begin{remark}
Note that the variance of a mixture is \emph{not} the weighted sum of the
component variances: the dispersion between the component means
contributes the additional term visible in \eqref{eq:Vt}. We write
$\sigma \triangleq \sqrt{\V(t)} = \Xb\sqrt{p(4-3p)/3}$.
\end{remark}

\subsubsection{Expected Number of Searched Nodes}

Let $M_i$ denote the number of signature-holding nodes in layer $i$
($M_0=1$). Each of the $M_{i-1}n_i$ searched nodes of layer $i$ holds a
signature independently with probability $p$, hence
$\E(M_i \mid M_{i-1}) = p\,n_i\,M_{i-1}$ and, by induction,
\begin{equation}\label{eq:EMi}
\E(M_i) = p^{\,i}\prod_{j=1}^{i} n_j , \qquad i=1,\dots,H .
\end{equation}

\begin{remark}[Plug-in approximation]\label{rem:jensen}
$M_i$ enters several of the formulas below nonlinearly. Wherever this
happens we use the plug-in approximation
$\E\,g(M_{i-1}) \approx g\bigl(\E M_{i-1}\bigr)$,
and we quantify the resulting error by exact enumeration of the width
law in Section~\ref{sec:validation}. Formulas that are linear in $M_{i-1}$ ($T_1$, $T_3$) are exact. We
set $\tau_i(0)=0$: a layer whose parent width is zero is never
executed, and $\mu(1)=0$ in \eqref{eq:mu}, whose expansion is
undefined at $M=1$. The plug-in also presumes the width distribution
is concentrated. In subcritical trees ($pn_i\lesssim 1$) extinction
carries substantial probability, and evaluating a nonlinear $\tau_i$ at
a mean width that the process may never reach is a structural, not a
rounding, error: at $H=3$, $n=6$, $p=0.1$ the exact average
\eqref{eq:cond} gives $T_2=1.72$ against $3.00$ for the plug-in. We
recommend, as a conservative rule of thumb, \eqref{eq:cond} when
$pn_i$ is around $2$ or below.
\end{remark}

\subsubsection{Parallel Search of $n$ Nodes}

\begin{lemma}\label{lem:parallel}
Let $\hat t$ be the completion time of $n$ nodes searched in parallel,
each holding a signature independently with probability $p$. Then
\begin{equation}\label{eq:parallel}
\E(\hat t) = \Bigl(1-\frac{p^{\,n}}{n+1}\Bigr)\,2\Xb .
\end{equation}
\end{lemma}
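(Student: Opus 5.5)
Write $\hat t=\max_{1\le j\le n} t_j$, where $t_1,\dots,t_n$ are i.i.d.\ node times with the mixture law \eqref{eq:mixture}. We compute its expectation from the distribution function of the maximum. Since $\hat t\in[0,s]$, we have $\E(\hat t)=\int_0^s \prob(\hat t>x)\,dx = s-\int_0^s F(x)^n\,dx$, where $F$ is the common CDF of a single node time.

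The first step is to read $F$ off the mixture. On $[0,s)$ the point-mass component $f_2$ at $t=s$ contributes nothing, so $F(x)=p\,x/s$ for $0\le x<s$ (with $F(s)=1$, which affects only a null set). By independence, the CDF of the maximum is $F(x)^n=p^n(x/s)^n$ on $[0,s)$.

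The second step is the integration:
\begin{equation*}
\int_0^s p^n\Bigl(\frac{x}{s}\Bigr)^{n}dx=\frac{p^n s}{n+1}.
\end{equation*}
This gives $\E(\hat t)=s\bigl(1-p^n/(n+1)\bigr)$, and substituting $s=2\Xb$ yields \eqref{eq:parallel}.

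As a sanity check we can condition instead: with probability $1-p^n$ at least one node lacks a signature, so $\hat t=s$ deterministically. With probability $p^n$ all $n$ nodes hold a signature, and $\hat t$ is the maximum of $n$ independent $U[0,s]$ variables, whose mean is $ns/(n+1)$. Combining, $(1-p^n)s+p^n ns/(n+1)=s-p^n s/(n+1)$, which agrees with the first computation.

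There is no real obstacle in this argument. The only point needing care is the atom at $s$: the distribution is not absolutely continuous, so we should work with the tail-integral formula (or with the conditioning argument) rather than differentiate $F^n$ and integrate against a density.
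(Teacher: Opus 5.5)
Your proof is correct. The paper proves the lemma by exactly the conditioning argument you give as your sanity check. With probability $1-p^n$ some node is empty and $\hat t=s$; otherwise $\hat t$ is the maximum of $n$ uniforms, with mean $\tfrac{n}{n+1}s$.

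Your primary route is the tail integral $\E(\hat t)=s-\int_0^s F(x)^n\,dx$ with $F(x)=px/s$ on $[0,s)$. The paper uses this same route later, in Section~\ref{sec:general}, eq.~\eqref{eq:gen-max}, and recovers the present lemma there as the uniform special case.

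Each route buys something different:
\begin{itemize}
\item The tail integral needs only the single-node CDF. It therefore carries over unchanged to an arbitrary discovery law (replace $px/s$ by $pF_{\mathrm s}(x)$) and to grid evaluation as in \eqref{eq:t4num}. It also absorbs the atom at $s$ automatically, as you note.
\item The conditioning argument exposes the structural fact that early completion requires \emph{every} node to finish early. The paper reuses this directly in Lemma~\ref{lem:parallelK}, where the event becomes ``every node holds exactly $K$''. It is also what drives the known-count $S_2$, $S_3$ and $S_5$ formulas. There, occupancies are dependent (a uniform $m_i$-subset), so the product form $F(x)^n$ is no longer valid, but the conditioning argument still goes through.
\end{itemize}
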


\begin{proof}
If at least one node holds no signature (probability $1-p^n$), that node
scans its whole file, so $\hat t = s$ exactly. If all $n$ nodes hold
signatures (probability $p^n$), $\hat t=\max(t_1,\dots,t_n)$ with
$t_k \sim U[0,s]$ i.i.d.; the maximum has CDF $(y/s)^n$ and mean
$\tfrac{n}{n+1}s$. Combining,
$\E(\hat t)=p^n \tfrac{n}{n+1}s+(1-p^n)s$, which is \eqref{eq:parallel}.
\end{proof}

\subsubsection{Total Expected Search Times}

\begin{theorem}\label{th:unknown-one}
Under the model of Section~\ref{sec:model} with at most one signature per
file, the total expected search times are
\begin{align}
T_1 &= \sum_{i=1}^{H} p^{\,i-1}\Bigl(\prod_{j=1}^{i} n_j\Bigr)(2-p)\Xb,
  &&\lab{exact}\label{eq:T1}\\
T_2 &\approx \sum_{i=1}^{H}
  \Bigl(1-\frac{p^{\,\nu_i}}{\nu_i+1}\Bigr)2\Xb,
  \;\; \nu_i = \E(M_{i-1})n_i,
  &&\lab{plug-in}\label{eq:T2}\\
T_3 &= \sum_{i=1}^{H} \E(M_{i-1})
  \Bigl(1-\frac{p^{\,n_i}}{n_i+1}\Bigr)2\Xb,
  &&\lab{exact}\label{eq:T3}\\
T_4 &\approx \sum_{i=1}^{H}
  \Bigl[\sqrt{n_i}\,\sigma\,\mu\bigl(\E(M_{i-1})\bigr)
        + n_i(2-p)\Xb\Bigr],
  &&\lab{\!\!plug-in, clt/evt}\label{eq:T4}\\
T_5 &= \Bigl(1-\frac{p^{\,N}}{N+1}\Bigr)2\Xb,
  \;\; N=\sum_{i=1}^{H}\prod_{j=1}^{i} n_j,
  &&\lab{exact}\label{eq:T5}
\end{align}
where $\sigma=\Xb\sqrt{p(4-3p)/3}$ and
\begin{equation}\label{eq:mu}
\mu(M) = \sqrt{2\ln M}
       - \frac{\ln\ln M + \ln 4\pi - 2\gamma}{2\sqrt{2\ln M}},
\end{equation}
with $\gamma\approx 0.5772$ the Euler--Mascheroni constant.
\end{theorem}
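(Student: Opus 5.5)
The plan is to build every formula on the conditional decomposition \eqref{eq:cond}. For each scenario we compute the conditional layer time $\tau_i(k)$, given $M_{i-1}=k$ searched parents (so $kn_i$ searched nodes). Then we average over $k$, exactly when $\tau_i$ is affine and by plug-in otherwise. The inputs are Lemma~\ref{lem:node} for per-node moments, Lemma~\ref{lem:parallel} for a parallel group, and \eqref{eq:EMi} for $\E(M_{i-1})=p^{i-1}\prod_{j<i}n_j$.

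\textbf{$S_1$ and $S_3$.} In $S_1$ the $kn_i$ nodes are scanned one after another, so $\tau_i(k)=kn_i(2-p)\Xb$ by linearity of expectation and \eqref{eq:Et}. This is affine in $k$, so averaging gives $\E(M_{i-1})n_i(2-p)\Xb$. Substituting \eqref{eq:EMi} yields \eqref{eq:T1}, exactly. For $S_3$ the $k$ subtrees run sequentially. Each subtree is a parallel group of $n_i$ nodes whose children independently hold signatures with probability $p$, so Lemma~\ref{lem:parallel} gives $\tau_i(k)=k(1-p^{n_i}/(n_i+1))2\Xb$. This is again affine, and \eqref{eq:T3} is exact. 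We set $\tau_i(0)=0$, which is consistent with affinity.

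\textbf{$S_2$ and $S_5$.} In $S_2$ the whole layer of $kn_i$ nodes is one parallel group, so Lemma~\ref{lem:parallel} with $n$ replaced by $kn_i$ gives $\tau_i(k)=(1-p^{kn_i}/(kn_i+1))2\Xb$. This is nonlinear in $k$, so we evaluate it at $\nu_i=\E(M_{i-1})n_i$; that is the plug-in \eqref{eq:T2}. For $S_5$, Remark~\ref{rem:S5} launches all $N$ nodes at $t=0$ in a single stage. The only point needing care is the law of the nodes. A node in a pruned subtree still scans its file, and that file contains no signature, so it takes time $s$. Completion then equals $s$ unless every one of the $N$ nodes holds a signature. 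Under Assumption~\ref{as:inherit} that event means every node along every path holds one, which has probability $p^N$, and on it the completion time is the maximum of $N$ i.i.d.\ $U[0,s]$ scans. The computation in the proof of Lemma~\ref{lem:parallel} therefore applies verbatim with $n=N$ and gives \eqref{eq:T5} exactly.

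\textbf{$S_4$, the main obstacle.} Given $k$ parents, there are $k$ independent subtrees running in parallel. Subtree $\ell$ takes $Y_\ell=\sum_{r=1}^{n_i}t_r$, a sum of i.i.d.\ mixture times with mean $n_i(2-p)\Xb$ and, by \eqref{eq:Vt}, variance $n_i\sigma^2$; the layer time is $\max_\ell Y_\ell$. The proof proceeds in three approximation steps:
\begin{itemize}
\item[(i)] By the CLT, approximate $Y_\ell \approx n_i(2-p)\Xb+\sqrt{n_i}\,\sigma Z_\ell$ with $Z_\ell$ i.i.d.\ standard normal.
\item[(ii)] Use the classical Gumbel normalization for the maximum of $k$ standard normals, $\max Z_\ell\approx b_k+a_k G$, with $b_k=\sqrt{2\ln k}-(\ln\ln k+\ln 4\pi)/(2\sqrt{2\ln k})$ and $a_k=1/\sqrt{2\ln k}$. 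Since $\E G=\gamma$, the expected maximum is $b_k+\gamma a_k=\mu(k)$, which is exactly \eqref{eq:mu}.
\item[(iii)] Plug in $k=\E(M_{i-1})$, with $\mu(1)=0$ and $\tau_i(0)=0$, and sum over layers to obtain \eqref{eq:T4}.
\end{itemize}
The delicate point is that neither step (i) nor step (ii) is rigorous at finite $n_i$ and $k$. The mixture has an atom at $s$, so the CLT is only accurate for moderate $n_i$. The EVT expansion converges slowly, at rate $1/\ln k$. This is why \eqref{eq:T4} carries the \lab{plug-in, clt/evt} label rather than a proof of exactness, and its accuracy has to be delegated to the Monte Carlo study of Section~\ref{sec:validation}.
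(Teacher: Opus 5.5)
Your proposal is correct and follows the paper's own proof sketch: condition on the width $M_{i-1}$ via \eqref{eq:cond}, collapse the affine cases $S_1$ and $S_3$ exactly, use the plug-in for $S_2$, apply Lemma~\ref{lem:parallel} with $n=N$ for $S_5$, and for $S_4$ combine a CLT for the subtree sums with the normal extreme-value expansion. You add two small refinements the sketch leaves implicit: you derive $\mu(M)$ explicitly as $b_k+\gamma a_k$ from the Gumbel normalization instead of citing it, and for $S_5$ you note that the $N$ node statuses are not independent under Assumption~\ref{as:inherit}, yet the all-hold event still has probability $p^N$, which is what justifies reusing Lemma~\ref{lem:parallel}.
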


\begin{proof}[Proof sketch]
$T_1$: conditionally on $M_{i-1}$ the layer-$i$ workload is
$M_{i-1}n_i$ sequentially searched nodes of mean $(2-p)\Xb$ each
(Lemma~\ref{lem:node}); the conditional expectation is affine in
$M_{i-1}$, so \eqref{eq:cond} collapses and \eqref{eq:EMi} gives
\eqref{eq:T1} exactly. $T_2$ and $T_3$ follow from
Lemma~\ref{lem:parallel} applied to the whole layer and to one subtree:
$T_3$ is again affine in $M_{i-1}$ and hence exact, whereas $T_2$'s
conditional expectation is nonlinear in $M_{i-1}$ and is evaluated by
plug-in (Remark~\ref{rem:jensen}). $T_5$ involves no random width ---
by Remark~\ref{rem:S5} all $N$ nodes are launched --- so
Lemma~\ref{lem:parallel} applies directly with $n=N$ and the result is
exact.

For $T_4$, the layer-$i$ time is the maximum over $M_{i-1}$ parallel
subtrees of the sum $U_j=\sum_{k=1}^{n_i}t_{j,k}$ of $n_i$ i.i.d.\
node times. By the central limit theorem
$U_j \approx n_i(2-p)\Xb + \sqrt{n_i}\,\sigma Z_j$ with
$Z_j\sim N(0,1)$; the expected maximum of $M$ i.i.d.\ standard normal
variables admits the expansion $\mu(M)$ of \eqref{eq:mu}
\cite{david2003}, in the tradition of \cite{cramer1946}. Combining the two yields \eqref{eq:T4}.
\end{proof}

\begin{remark}[Applicability of $T_4$]\label{rem:T4}
\eqref{eq:T4} rests on two asymptotic steps: the CLT (accurate for
$n_i\gtrsim 30$) and the extreme-value expansion (accurate for
$M_{i-1}\gtrsim 10$). Within it the formula stays within about $1\%$ of simulation
(supplementary accuracy table). Outside this
regime, the numerical convolution route in \eqref{eq:t4num} should be
used instead.
\end{remark}

\subsubsection{Exact Numerical Evaluation of the $S_4$ Layer}
\label{sec:t4num}

The asymptotic step in \eqref{eq:T4} can be avoided entirely. The
subtree time $U=\sum_{k=1}^{n}t_k$ is a sum of i.i.d.\ node times whose
law is known exactly --- density $p/s$ on $[0,s)$ plus an atom of mass
$1-p$ at $s$ --- so its CDF $F_U$ is the $n$-fold convolution of that
law, computable on a grid, and the layer time follows from the
identity
\begin{equation}\label{eq:t4num}
\E\bigl[\max(U_1,\dots,U_M)\bigr]
 = ns-\int_0^{ns}F_U(x)^{M}\,dx .
\end{equation}
This is \lab{exact} up to discretization and costs $O(n^2 G)$ for a
grid of $G$ points, against $O(1)$ for \eqref{eq:T4}. We therefore
select automatically: use \eqref{eq:T4} when $n_i\ge 50$ and
$M_{i-1}\ge 10$, and \eqref{eq:t4num} otherwise. Over configurations spanning $n\in[4,100]$ and $M\in[3,50]$
(supplementary material), the numerical route stays within $0.06\%$ of
simulation everywhere, while the asymptotic formula degrades from
$0.5\%$ at $n=100$, $M=50$ to $3.3\%$ at $n=4$, $M=5$ --- so the
selection rule buys accuracy exactly where it is needed.

\subsection{General Discovery-Time Distributions}\label{sec:general}

Signature positions need not be uniform. Let the discovery time in a
signature-holding file have an arbitrary CDF $F_{\mathrm s}$ on
$[0,s]$, with mean $\mu_{\mathrm s}$ and second moment
$\mu^{(2)}_{\mathrm s}$. The node time is still the mixture
\eqref{eq:mixture}, so
\begin{equation}\label{eq:gen-node}
\E(t)=p\,\mu_{\mathrm s}+(1-p)s ,\qquad
\V(t)=p\,\mu^{(2)}_{\mathrm s}+(1-p)s^{2}-\E(t)^{2},
\end{equation}
and the node CDF is $F_t(x)=pF_{\mathrm s}(x)$ for $x<s$, with the
remaining mass at $s$. Sequential stages ($T_1$, and the subtree sums inside $T_4$ via the
CLT) depend on $F_{\mathrm s}$ only through these moments; parallel
stages do not. For $n$ nodes searched concurrently,
\begin{equation}\label{eq:gen-max}
\E\bigl[\max(t_1,\dots,t_n)\bigr]
 = s-\int_0^{s}\bigl[p\,F_{\mathrm s}(x)\bigr]^{n}\,dx ,
\end{equation}
which requires the entire CDF. Uniform positions give $\int_0^s (px/s)^n dx = p^n s/(n+1)$ and
recover Lemma~\ref{lem:parallel}; the $K$-signature case follows by replacing the early-finish weight
$p$ with $p^{K}$ and $F_{\mathrm s}$ with the CDF
$F_{\mathrm s}(x)^{K}$ of the $K$th order statistic.
Every result below thus transports to non-uniform positions via
\eqref{eq:gen-node}--\eqref{eq:gen-max}, at the price of one numerical
integral where no closed form exists.

\subsection{At Most $K$ Signatures per File}\label{sec:unknown-K}

We now allow each file to hold up to $K$ signatures. The number of
signatures in a searched node is modeled as $\mathrm{Binomial}(K,p)$:
each of the $K$ capacity slots is occupied independently with probability
$p$. A node therefore holds at least one signature with probability
$q \triangleq 1-(1-p)^K$, and, writing $\widehat M_i$ for the number of
signature-holding nodes in layer $i$, the argument leading to
\eqref{eq:EMi} gives
\begin{equation}\label{eq:EMhat}
\E(\widehat M_i)=q^{\,i}\prod_{j=1}^{i}n_j ,\qquad \widehat M_0=1 .
\end{equation}

The scan of a node can terminate early only when the $K$th signature is
found, since only then can the searcher be certain that nothing remains.
Consequently, with probability $p^K$ (exactly $K$ signatures) the node
time is the largest of $K$ i.i.d.\ $U[0,s]$ positions, with density
$K t^{K-1}/s^K$ and mean $\tfrac{K}{K+1}s$; otherwise the node time is
$s$. This yields the following analogue of Lemma~\ref{lem:node}
(derivations in the supplementary material).

\begin{lemma}\label{lem:nodeK}
With at most $K$ signatures per file,
\begin{align}
\E(t) &= \Bigl(1-\frac{p^{K}}{K+1}\Bigr)2\Xb , \label{eq:EtK}\\
\V(t) &= \Bigl[p^{K}\tfrac{K}{K+2} + 1-p^{K}
        - \bigl(1-\tfrac{p^{K}}{K+1}\bigr)^{2}\Bigr]s^{2}
        \triangleq \sigma_K^2 . \label{eq:VtK}
\end{align}
For $K=1$ these reduce to \eqref{eq:Et} and \eqref{eq:Vt}.
\end{lemma}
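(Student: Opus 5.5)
The plan is to mirror the proof of Lemma~\ref{lem:node}, treating the node time as a two-component mixture. With probability $p^K$ all $K$ slots are occupied, and the scan stops at the $K$th signature. With the complementary probability $1-p^K$ at least one slot is empty, so the scan must read the whole file and $t=s$ deterministically. On the first event, the $K$ signature positions are i.i.d.\ $U[0,s]$, so $t=\max(u_1,\dots,u_K)$ has CDF $(t/s)^K$ and density $Kt^{K-1}/s^K$ on $[0,s]$. This is the same order-statistic computation used in the proof of Lemma~\ref{lem:parallel}, with $n$ replaced by $K$ and all nodes replaced by all slots. The density of $t$ is therefore $p^K\,Kt^{K-1}/s^K$ plus an atom of mass $1-p^K$ at $s$, the analogue of \eqref{eq:mixture}.

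\textbf{Mean.} The first step is the mixture mean. The maximum of $K$ uniforms has mean $\int_0^s t\,Kt^{K-1}s^{-K}\,dt=\tfrac{K}{K+1}s$. Hence
\[
\E(t)=p^K\tfrac{K}{K+1}s+(1-p^K)s=\Bigl(1-\tfrac{p^K}{K+1}\Bigr)s,
\]
which is \eqref{eq:EtK} since $s=2\Xb$. Equivalently, this is \eqref{eq:parallel} with $n=K$ and $p$ read as the slot probability.

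\textbf{Variance.} The second step is the second moment. On the full-slot event,
\[
\int_0^s t^2\,Kt^{K-1}s^{-K}\,dt=\tfrac{K}{K+2}s^2,
\]
so $\E(t^2)=\bigl[p^K\tfrac{K}{K+2}+1-p^K\bigr]s^2$. Subtracting $\E(t)^2$ gives \eqref{eq:VtK} directly. As the Remark after Lemma~\ref{lem:node} stresses, the variance must be computed as $\E(t^2)-\E(t)^2$ for the whole mixture, not as a weighted sum of component variances.

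\textbf{Reduction to $K=1$.} The final step is the consistency check. For the mean, $1-p/2$ times $2\Xb$ gives $(2-p)\Xb$, matching \eqref{eq:Et}. For the variance, the bracket becomes
\[
\tfrac{p}{3}+1-p-\bigl(1-\tfrac p2\bigr)^2=\tfrac{p}{3}-\tfrac{p^2}{4}=\tfrac{p(4-3p)}{12},
\]
matching \eqref{eq:Vt}.

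No step is genuinely hard. The only point needing care is the modeling claim that early termination happens exactly when all $K$ slots are filled, so that the early-stop time is the $K$th order statistic rather than the first. I would justify this from the searcher's perspective, as the paper does: until $K$ signatures have been seen, an unscanned remainder could still hold one, so the scan must continue to $s$.
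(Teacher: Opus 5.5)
Your proof is correct and takes essentially the paper's route: the paper likewise treats the node time as an atom of mass $1-p^{K}$ at $s$ plus, with probability $p^{K}$, the maximum of $K$ i.i.d.\ $U[0,s]$ positions, and reads off the first two moments. Your moment computations and the $K=1$ reduction all check out.
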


\begin{lemma}\label{lem:parallelK}
If $\nu$ such nodes are searched in parallel, the expected completion
time is
\begin{equation}\label{eq:parallelK}
\E(\hat t)=\Bigl(1-\frac{p^{\,\nu K}}{\nu K+1}\Bigr)2\Xb .
\end{equation}
\end{lemma}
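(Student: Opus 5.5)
The plan mirrors the proof of Lemma~\ref{lem:parallel}, replacing the single-signature early-finish event by the event that a node is completely full. By the per-node model preceding Lemma~\ref{lem:nodeK}, each of the $\nu$ nodes independently has a signature count $\mathrm{Binomial}(K,p)$. Its scan ends before $s$ only if all $K$ slots are occupied, which happens with probability $p^K$. In that case the node time is the maximum of its $K$ i.i.d.\ $U[0,s]$ positions. Otherwise the node time is exactly $s$.

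First, condition on the event $A$ that all $\nu$ nodes are full. By independence across nodes (Assumption~\ref{as:inherit}), $\prob(A)=p^{\nu K}$. On the complement $A^c$, at least one node scans its entire file, so $\hat t=s$ deterministically. This is the same dichotomy used in Lemma~\ref{lem:parallel}.

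Second, evaluate the conditional mean on $A$. Here $\hat t=\max_k \max_{\ell\le K} u_{k,\ell}$. I will need to check that the $\nu K$ positions $u_{k,\ell}$ are i.i.d.\ $U[0,s]$ given $A$, across slots and across nodes. They are, because conditioning on occupancy does not affect positions. Hence $\hat t$ is the maximum of $\nu K$ i.i.d.\ uniforms on $[0,s]$. Its CDF is $(y/s)^{\nu K}$ and its mean is $\frac{\nu K}{\nu K+1}s$, exactly as in Lemma~\ref{lem:parallel} with $n$ replaced by $\nu K$. Equivalently, one can use \eqref{eq:gen-max} with weight $p^K$ and CDF $(x/s)^K$, which gives
\[
\E(\hat t)=s-\int_0^s \bigl[p^K (x/s)^K\bigr]^{\nu}\,dx = s-\frac{p^{\nu K}s}{\nu K+1}.
\]

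Finally, combine the two cases:
\[
\E(\hat t)=p^{\nu K}\tfrac{\nu K}{\nu K+1}s+(1-p^{\nu K})s .
\]
Substituting $s=2\Xb$ yields \eqref{eq:parallelK}. There is no real obstacle here. The only point that needs care is the early-termination rule: a node stops before $s$ only when it has found its $K$th signature, so the atom at $s$ carries mass $1-p^K$ per node rather than $1-q$. I will also note that $\nu$ is treated as a fixed integer here. When the lemma is applied with $\nu=\E(\widehat M_{i-1})n_i$, the result becomes a plug-in, which is outside the scope of this statement.
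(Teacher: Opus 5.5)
Your proposal is correct and follows essentially the same argument as the paper: condition on the event that all $\nu$ nodes are saturated (probability $p^{\nu K}$), note that $\hat t=s$ otherwise, and identify $\hat t$ on that event as the maximum of $\nu K$ i.i.d.\ $U[0,s]$ positions (the paper phrases it as the maximum of $\nu$ variables with CDF $(t/s)^K$), whose mean is $\tfrac{\nu K}{\nu K+1}s$. Your cross-check via \eqref{eq:gen-max} is a valid consistency confirmation, though the argument does not need it.
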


\begin{proof}
The completion time is smaller than $s$ only if \emph{every} node holds
exactly $K$ signatures, an event of probability $(p^{K})^{\nu}=p^{\nu K}$,
in which case it is the maximum of $\nu$ i.i.d.\ variables with CDF
$(t/s)^K$, i.e., has CDF $(t/s)^{\nu K}$ and mean
$\tfrac{\nu K}{\nu K+1}s$. Hence
$\E(\hat t)=p^{\nu K}\tfrac{\nu K}{\nu K+1}s+(1-p^{\nu K})s$, which
simplifies to \eqref{eq:parallelK}. For $K=1$ this is
Lemma~\ref{lem:parallel}.
\end{proof}

\begin{theorem}\label{th:unknown-K}
With at most $K$ signatures per file, writing
$\hat\nu_i=\E(\widehat M_{i-1})\,n_i$ and
$N=\sum_{i=1}^{H}\prod_{j=1}^{i}n_j$,
\begin{align}
\widehat T_1 &= \sum_{i=1}^{H}\E(\widehat M_{i-1})\,n_i
   \Bigl(1-\frac{p^{K}}{K+1}\Bigr)2\Xb , \label{eq:T1K}\\
\widehat T_2 &\approx \sum_{i=1}^{H}
   \Bigl(1-\frac{p^{\,\hat\nu_i K}}{\hat\nu_i K+1}\Bigr)2\Xb ,
   \label{eq:T2K}\\
\widehat T_3 &= \sum_{i=1}^{H}\E(\widehat M_{i-1})
   \Bigl(1-\frac{p^{\,n_i K}}{n_i K+1}\Bigr)2\Xb , \label{eq:T3K}\\
\widehat T_4 &\approx \sum_{i=1}^{H}
   \Bigl[\sqrt{n_i}\,\sigma_K\,\mu\bigl(\E(\widehat M_{i-1})\bigr)
   + n_i\Bigl(1-\frac{p^{K}}{K+1}\Bigr)2\Xb\Bigr] , \label{eq:T4K}\\
\widehat T_5 &= \Bigl(1-\frac{p^{\,N K}}{N K+1}\Bigr)2\Xb .
   \label{eq:T5K}
\end{align}
$\widehat T_1$ and $\widehat T_3$ are \lab{exact} (their conditional
expectations are affine in $\widehat M_{i-1}$, so \eqref{eq:cond}
collapses); $\widehat T_5$ is \lab{exact} by Remark~\ref{rem:S5};
$\widehat T_2$ is \lab{plug-in}; $\widehat T_4$ is
\lab{plug-in, clt/evt}, inheriting Remark~\ref{rem:T4}.
\end{theorem}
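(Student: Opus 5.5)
The plan mirrors the proof of Theorem~\ref{th:unknown-one}: each $\widehat T_j$ is a sum over layers of $\E[\widehat\tau_i(\widehat M_{i-1})]$ via \eqref{eq:cond}. Inside that template we replace Lemma~\ref{lem:node} by Lemma~\ref{lem:nodeK} and Lemma~\ref{lem:parallel} by Lemma~\ref{lem:parallelK}. We also replace the width law \eqref{eq:EMi} by \eqref{eq:EMhat}.

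First we justify \eqref{eq:EMhat}. Slots are filled independently, so nodes are independent, and each searched node is nonempty with probability $q$. Hence $\widehat M_i\mid\widehat M_{i-1}\sim\mathrm{Bin}(\widehat M_{i-1}n_i,q)$, the tower property gives $\E(\widehat M_i)=q\,n_i\,\E(\widehat M_{i-1})$, and induction from $\widehat M_0=1$ yields \eqref{eq:EMhat}. One point needs checking: children of a signature-holding parent must be i.i.d.\ $\mathrm{Binomial}(K,p)$ regardless of the parent's count. This is exactly Assumption~\ref{as:inherit} read at slot level (Remark~\ref{rem:p}), so the node-time law of every searched node is the one used in Lemma~\ref{lem:nodeK}.

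Next come the layer-wise conditional expectations given $\widehat M_{i-1}=k$.
\begin{itemize}
\item[$S_1$:] $k n_i$ nodes are scanned in sequence. Linearity and \eqref{eq:EtK} give $\widehat\tau_i(k)=k n_i(1-\tfrac{p^K}{K+1})2\Xb$. This is affine in $k$, so \eqref{eq:cond} collapses to $\E(\widehat M_{i-1})$ and \eqref{eq:T1K} is exact.
\item[$S_3$:] the $k$ subtrees run sequentially, and each is a parallel batch of $n_i$ nodes. Lemma~\ref{lem:parallelK} with $\nu=n_i$ gives a per-subtree mean. Summing $k$ such terms is again affine in $k$, which gives \eqref{eq:T3K} exactly.
\item[$S_5$:] all $N$ nodes are launched (Remark~\ref{rem:S5}), so there is no random width. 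Lemma~\ref{lem:parallelK} with $\nu=N$ gives \eqref{eq:T5K} exactly. The speculatively scanned nodes below pruned parents also count here. We assume, as the remark does, that they share the same node law; otherwise the early-finish event for them would be empty and \eqref{eq:T5K} would reduce to $2\Xb$.
\item[$S_2$:] a single parallel batch of $k n_i$ nodes gives $\widehat\tau_i(k)=(1-\tfrac{p^{kn_iK}}{kn_iK+1})2\Xb$. This is nonlinear in $k$, so substituting $k=\E\widehat M_{i-1}$ is the plug-in of Remark~\ref{rem:jensen}, yielding \eqref{eq:T2K}.
\end{itemize}

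For $S_4$, each of the $k$ parallel subtrees takes time $U_j$, a sum of $n_i$ i.i.d.\ node times. These have mean $(1-\tfrac{p^K}{K+1})2\Xb$ and variance $\sigma_K^2$ by Lemma~\ref{lem:nodeK}, with $\sigma_K<\infty$ since the times are bounded by $s$. The CLT gives $U_j\approx n_i\E(t)+\sqrt{n_i}\sigma_K Z_j$. The expected maximum of $k$ i.i.d.\ normals is approximated by $\mu(k)$ of \eqref{eq:mu}, and plugging in $k=\E\widehat M_{i-1}$ gives \eqref{eq:T4K}, with the same regime caveats as Remark~\ref{rem:T4}.

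The main obstacle is not the algebra but Lemma~\ref{lem:nodeK} itself, specifically its variance \eqref{eq:VtK}. I would verify it from the mixture: with probability $p^K$ the node time is the maximum of $K$ uniforms, with second moment $\tfrac{K}{K+2}s^2$; otherwise the time is $s$. This gives $\E(t^2)=[p^K\tfrac{K}{K+2}+1-p^K]s^2$, and subtracting the square of \eqref{eq:EtK} yields \eqref{eq:VtK}. At $K=1$ this reduces to $\tfrac{p(4-3p)}{12}s^2$, matching \eqref{eq:Vt}. The exact labels then follow from affinity in $\widehat M_{i-1}$, and the approximate labels follow from the plug-in and CLT/EVT steps.
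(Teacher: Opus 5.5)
Your overall route matches the paper's. Theorem~\ref{th:unknown-K} is the capacity-$K$ version of the proof sketch of Theorem~\ref{th:unknown-one}, with Lemmas~\ref{lem:nodeK} and \ref{lem:parallelK} and \eqref{eq:EMhat} substituted in. Exactness comes from affinity in $\widehat M_{i-1}$, and the other labels come from the plug-in and CLT/EVT steps. Your treatment of $S_1$--$S_4$ and your check of \eqref{eq:VtK} are correct.

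The one wrong step is the caveat you attach to $S_5$.
\begin{itemize}
\item[(a)] Remark~\ref{rem:S5} does not assume that nodes below pruned parents share the node law. The model says the opposite: by Assumption~\ref{as:inherit} they hold no signature and scan in full.
\item[(b)] Your ``otherwise'' conclusion is false. Such nodes exist only on realizations where some ancestor is empty. That ancestor already runs for the full time $s$, so these nodes never change the completion time, and \eqref{eq:T5K} does not collapse to $2\Xb$.
\end{itemize}

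No extra hypothesis is needed; the argument is as follows.
\begin{itemize}
\item[(i)] The run ends before $s$ iff all $N$ launched nodes hold exactly $K$ signatures.
\item[(ii)] Condition layer by layer. If all of layer $i-1$ is full, hence nonempty, the $N_i=\prod_{j\le i}n_j$ nodes of layer $i$ are independent $\mathrm{Binomial}(K,p)$. The event in (i) therefore has probability $\prod_i (p^K)^{N_i}=p^{NK}$.
\item[(iii)] On that event the $NK$ signature positions are i.i.d.\ $U[0,s]$, so the maximum has mean $\tfrac{NK}{NK+1}s$.
\end{itemize}
Hence \eqref{eq:T5K} is exact under the stated model.

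This computation is also what ``Lemma~\ref{lem:parallelK} with $\nu=N$'' actually requires. Under Assumption~\ref{as:inherit} the $N$ launched nodes are not independent, so the lemma cannot be applied verbatim. You were right to notice that dependence, but you resolved it in the wrong direction.
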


\subsection{Unlimited Signatures per File}\label{sec:unknown-inf}

Removing the capacity bound corresponds to $K\to\infty$. For
$0<p<1$, $p^{K}\to 0$ and $q\to 1$, so early termination becomes
impossible ($\E(t)\to 2\Xb$, $\sigma_K\to 0$) and every node is
searched ($\E(\widehat M_i)\to\prod_{j\le i}n_j$); the endpoints
$p\in\{0,1\}$ are degenerate and excluded. Theorem~\ref{th:unknown-K} then collapses accordingly.

Writing $N_i=\prod_{j\le i}n_j$, the limits are
$T_1=2\Xb\sum_i N_i$, $T_2=2H\Xb$,
$T_3=2\Xb\sum_i N_{i-1}$, $T_4=2\Xb\sum_i n_i$ and $T_5=2\Xb$.

\section{Known Number of Signatures}\label{sec:known}

We now assume the exact number of signatures per layer,
$m_1,\dots,m_H$ (with $m_0=1$), is known in advance --- e.g., from an
index, a checksum manifest, or a prior coarse scan. For the capacity-one case, layer $i$ contains
$m_{i-1}$ searched subtrees ($n_i$ nodes each), and the $m_i$ signatures
occupy a uniformly random $m_i$-subset of the $m_{i-1}n_i$ searched
nodes.

\subsection{At Most One Signature per File}\label{sec:known-one}

\subsubsection{Sequential Search ($S_1$)}

Write $\eta_i = m_{i-1}n_i$ for the number of searched nodes of layer
$i$. Under Assumption~\ref{as:nocancel} every searched node is scanned:
$m_i$ of them stop at a signature (mean $\Xb$), the other
$\eta_i-m_i$ scan in full. By linearity,
\begin{equation}\label{eq:knownT1}
T_1^{\mathrm{kn}}
 = \sum_{i=1}^{H}\bigl(2\eta_i-m_i\bigr)\Xb ,
\end{equation}
which is \lab{exact} and requires no independence between nodes.

\subsubsection{Fully Parallel Layers ($S_2$)}

If $m_i<\eta_i$ at least one searched node is empty and the layer takes
exactly $2\Xb$ (an empty node always scans its full file, and no node
can exceed $s$). If $m_i=\eta_i$, the layer time is the maximum of
$\eta_i$ uniforms. With the indicator $\mathbf{1}\{\cdot\}$,
\begin{equation}\label{eq:knownT2}
T_2^{\mathrm{kn}}
 = \sum_{i=1}^{H}
   \Bigl(1-\frac{\mathbf{1}\{m_i=\eta_i\}}{m_i+1}\Bigr)2\Xb
 \qquad\text{(exact)} .
\end{equation}

\subsubsection{Parallel Subtrees, Sequential Across ($S_3$)}

Let $F_i=\prob(\text{a given subtree is fully occupied})
=\binom{(m_{i-1}-1)n_i}{\,m_i-n_i\,}\big/\binom{\eta_i}{m_i}$
(zero when $m_i<n_i$), i.e., the hypergeometric probability that a fixed
subtree receives $n_i$ of the $m_i$ signatures. A subtree that is not
fully occupied contains an empty node and takes exactly $2\Xb$; a fully
occupied one takes $\tfrac{n_i}{n_i+1}2\Xb$ in expectation. By linearity
of expectation over the $m_{i-1}$ subtrees,
\begin{equation}\label{eq:knownT3}
T_3^{\mathrm{kn}}
 = \sum_{i=1}^{H} m_{i-1}\,
   \Bigl(1-\frac{F_i}{n_i+1}\Bigr)2\Xb
 \qquad\text{(exact)} .
\end{equation}

\subsubsection{Sequential Subtrees, Parallel Across ($S_4$)}

Let $X_1,\dots,X_{m_{i-1}}$ be the signature counts of the subtrees of
layer $i$. Since the $m_i$ signatures occupy a uniform subset of the
$\eta_i$ nodes, $(X_1,\dots,X_{m_{i-1}})$ follows a \emph{multivariate
hypergeometric} law --- not the uniform-composition (``stars and bars'')
law, under which all count vectors would be equally likely. The two
models differ substantially; see Section~\ref{sec:validation}.

\begin{lemma}\label{lem:xmin}
Let $X_{\min}=\min_k X_k$ and
$\varphi_j(z)=\sum_{x=j}^{n_i}\binom{n_i}{x}z^{x}$. Then
\begin{equation}\label{eq:xmin}
\prob(X_{\min}\ge j)
 = \frac{[z^{m_i}]\;\varphi_j(z)^{\,m_{i-1}}}{\binom{\eta_i}{m_i}} ,
\end{equation}
where $[z^m]$ extracts the coefficient of $z^m$;
$\prob(X_{\min}=j)=\prob(X_{\min}\ge j)-\prob(X_{\min}\ge j+1)$.
The coefficient is computable by an $O(m_{i-1}m_i n_i)$ convolution.
\end{lemma}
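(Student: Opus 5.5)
The plan is a direct counting argument. It rests on the placement model stated just before the lemma: the $m_i$ signatures occupy a uniformly random $m_i$-subset of the $\eta_i=m_{i-1}n_i$ searched nodes. Hence every subset is equally likely with probability $1/\binom{\eta_i}{m_i}$. Consequently $\prob(X_{\min}\ge j)$ equals the number of $m_i$-subsets meeting every subtree in at least $j$ nodes, divided by $\binom{\eta_i}{m_i}$. It therefore suffices to show that this number is $[z^{m_i}]\varphi_j(z)^{m_{i-1}}$.

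The first step is to decompose a subset by subtrees. Since the $m_{i-1}$ subtrees partition the searched nodes into disjoint blocks of size $n_i$, an $m_i$-subset is in bijection with a tuple $(A_1,\dots,A_{m_{i-1}})$, where $A_k$ is a subset of block $k$ and $\sum_k|A_k|=m_i$. The condition $X_{\min}\ge j$ is exactly $|A_k|\ge j$ for all $k$. Summing over count vectors $(x_1,\dots,x_{m_{i-1}})$ with $j\le x_k\le n_i$ and $\sum x_k=m_i$, the number of admissible tuples is $\sum\prod_k\binom{n_i}{x_k}$. This is precisely the multivariate hypergeometric numerator, and it makes explicit why the law is not uniform over count vectors: each vector is weighted by $\prod_k\binom{n_i}{x_k}$.

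The second step is to recognize this sum as a coefficient. Expanding $\prod_{k=1}^{m_{i-1}}\sum_{x_k=j}^{n_i}\binom{n_i}{x_k}z^{x_k}$ and collecting the terms with total exponent $m_i$ gives exactly that sum. This establishes \eqref{eq:xmin}.

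The remaining claims are routine. The identity for $\prob(X_{\min}=j)$ is the difference of tail probabilities of an integer-valued variable. The complexity bound comes from computing $\varphi_j^{m_{i-1}}$ by repeated multiplication, truncating each time at degree $m_i$. Each multiplication by $\varphi_j$, which has at most $n_i+1$ terms, then costs $O(m_in_i)$, and there are $m_{i-1}$ of them.

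The only point needing care is the bijection in the first step, namely that disjointness of the subtrees makes the decomposition unique. I would also check the edge cases: if $m_i<jm_{i-1}$ or $m_i>\eta_i$, the coefficient vanishes, consistent with the probability being zero. For $j=0$, $\varphi_0(z)=(1+z)^{n_i}$, the numerator is $\binom{\eta_i}{m_i}$ by Vandermonde, and the probability is $1$ as it should be.
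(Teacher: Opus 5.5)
Your proposal is correct and follows the paper's argument. The equally likely $m_i$-subsets decompose uniquely by subtree, so the subsets meeting every subtree in at least $j$ nodes are counted by $[z^{m_i}]\varphi_j(z)^{m_{i-1}}$, and dividing by $\binom{\eta_i}{m_i}$ gives \eqref{eq:xmin}. Your truncated-convolution cost count and your $j=0$ Vandermonde check fill in details the paper leaves implicit. One small correction to your edge cases: when $m_i>\eta_i$ the denominator also vanishes, so that case lies outside the model rather than being a probability-zero event.
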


\begin{proof}
A placement with all counts $\ge j$ chooses $x_k\ge j$ nodes inside
subtree $k$ with $\sum_k x_k=m_i$; the number of such placements is the
coefficient of $z^{m_i}$ in $\prod_k \varphi_j(z)$, and all
$\binom{\eta_i}{m_i}$ placements are equally likely.
\end{proof}

A subtree with $x$ signatures takes $x\Xb+(n_i-x)2\Xb=(2n_i-x)\Xb$ in
expectation. The subtree with the fewest signatures therefore has the
largest \emph{conditional mean} time, though not necessarily the largest
realized time; identifying the layer maximum with it gives
\begin{equation}\label{eq:knownT4}
T_4^{\mathrm{kn}}
 \approx \sum_{i=1}^{H}\;
 \sum_{j=0}^{\lfloor m_i/m_{i-1}\rfloor}
 \prob(X_{\min}=j)\,\bigl(2n_i-j\bigr)\Xb .
\end{equation}

\begin{remark}[Nature of the approximation]\label{rem:A4}
\eqref{eq:knownT4} identifies the layer time with the expected time of
the minimum-count subtree. Conditionally on the counts, however, the
layer time is the maximum of random sums, which can be attained by
another subtree; hence \eqref{eq:knownT4} is a lower bound, observed
within about $2\%$ of simulation in the configurations of
Section~\ref{sec:validation}.
\end{remark}

\subsubsection{Fully Parallel Tree ($S_5$)}

Under Remark~\ref{rem:S5}, $S_5$ launches all $N_i=\prod_{j\le i}n_j$
nodes of layer $i$ regardless of pruning, so it finishes before $s$
only if $m_i=N_i$ for every $i$:
\begin{equation}\label{eq:knownT5}
T_5^{\mathrm{kn}}
 = \Bigl(1-\frac{\prod_{i=1}^{H}\mathbf{1}\{m_i=N_i\}}
   {1+\sum_i m_i}\Bigr)2\Xb
 \qquad\text{(exact)} .
\end{equation}

\subsection{At Most $K$ Signatures per File}\label{sec:known-K}

When a file can hold up to $K$ signatures, the $m_i$ signatures of layer
$i$ are placed into the $\widehat M_{i-1}n_i$ searched nodes subject to the
capacity bound. Two placement conventions are natural, and we analyze
both.

\begin{assumption}[Equal-composition placement, BE]\label{as:BE}
All capacity-feasible count vectors (compositions) are equally likely
--- the Bose--Einstein convention for indistinguishable signatures.
\end{assumption}

\begin{assumption}[Uniform placement, MB]\label{as:MB}
Each signature independently selects a node uniformly at random,
conditioned on no node exceeding $K$ --- the Maxwell--Boltzmann
convention, natural when signatures arise at nodes by independent
physical processes.
\end{assumption}

For $K=1$ the two conventions coincide with each other and with the
uniform-subset model of Section~\ref{sec:known}-\ref{sec:known-one}; for $K>1$ they
genuinely differ (Section~\ref{sec:discussion}). Both are captured by
one generating-function device.

\begin{lemma}[Capacity-constrained counts]\label{lem:counts}
Let $\phi_K(z)=\sum_{r=0}^{K}z^{r}$ and
$\psi_K(z)=\sum_{r=0}^{K}z^{r}/r!$. The number of placements of $m$
signatures into $n$ nodes of capacity $K$ is
\[
\Omega^{\,n,K}_{m}=[z^{m}]\phi_K(z)^{n}
\;\text{(BE)},\quad
A^{\,n,K}_{m}=m!\,[z^{m}]\psi_K(z)^{n}
\;\text{(MB)},
\]
where $\Omega^{\,n,K}_{m}$ is the capacity-constrained count derived in
our companion paper \cite{li2025urn}; every
capacity-feasible configuration is equally likely within its convention. Both are computable by an $O(nmK)$ convolution.
\end{lemma}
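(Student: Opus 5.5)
The plan is to treat the two conventions separately but by the same device: encode the per-node choice as a polynomial factor, use independence across the $n$ nodes to turn the count into a coefficient of the $n$-th power, then check the complexity claim.

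\emph{BE count.} A BE configuration is a count vector $(x_1,\dots,x_n)$ with $0\le x_k\le K$ and $\sum_k x_k=m$. Expand
\[
\phi_K(z)^n=\prod_{k=1}^{n}\sum_{x_k=0}^{K}z^{x_k}
 =\sum_{(x_1,\dots,x_n)\in\{0,\dots,K\}^n} z^{x_1+\cdots+x_n}.
\]
Each feasible vector contributes exactly one monomial $z^m$, so $[z^m]\phi_K(z)^n=\Omega^{n,K}_m$. This coincides with the companion-paper count \cite{li2025urn}. Uniformity holds by Assumption~\ref{as:BE}.

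\emph{MB count.} Signatures are distinguishable: an outcome is a map from $m$ labelled signatures to $n$ nodes with every fibre of size at most $K$. Group such maps by their count vector $x$. The number of maps with a given $x$ is the multinomial $m!/\prod_k x_k!$. Summing over feasible $x$ gives
\[
m!\sum_{x}\prod_k\frac{1}{x_k!}=m!\,[z^m]\prod_{k=1}^n\sum_{r=0}^{K}\frac{z^r}{r!}=m!\,[z^m]\psi_K(z)^n ,
\]
which is $A^{n,K}_m$. For uniformity, note that under Assumption~\ref{as:MB} the unconditioned process makes every map $\{1,\dots,m\}\to\{1,\dots,n\}$ equally likely, with probability $n^{-m}$. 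Conditioning on the feasibility event therefore leaves the feasible maps uniform, each with probability $1/A^{n,K}_m$. For $K=1$ both counts reduce correctly: $\Omega=\binom{n}{m}$ and $A=m!\binom{n}{m}$, the latter being the subset model with labels.

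\emph{Complexity.} Compute the coefficients of $g_\ell=g_{\ell-1}\cdot\phi_K$ (or $\cdot\,\psi_K$) for $\ell=1,\dots,n$, truncating each at degree $m$. Each multiplication updates at most $m+1$ coefficients, and each update sums at most $K+1$ terms, so the cost is $O(mK)$ per step and $O(nmK)$ overall.

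\emph{Expected obstacle.} This lies in the MB case. Two points need care there: correctly identifying "placements" with labelled maps rather than count vectors, and justifying that conditioning preserves uniformity. The combinatorial identities themselves are routine. A secondary numerical point is that the MB coefficients involve factorials. To stay within floating point, one should convolve with the rescaled weights $1/r!$ and multiply by $m!$ only at the end (or work in logs); this does not affect the operation count.
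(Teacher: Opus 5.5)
Your proposal is correct and follows the same generating-function route the lemma rests on. You use an ordinary generating function $\phi_K$ per node to count BE bounded compositions, and an exponential one $\psi_K$ (through the multinomial $m!/\prod_k x_k!$) to count labelled MB placements; uniformity holds by definition for BE and by conditioning uniform maps on feasibility for MB, and an $O(mK)$-per-factor truncated convolution gives $O(nmK)$. Your reading of MB placements as labelled maps is also what the paper's later formulas presuppose, e.g.\ $P_K=\binom{m}{K}A^{\eta-1,K}_{m-K}/A^{\eta,K}_{m}$.
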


Three node-level probabilities drive the analysis; writing
$\eta$ for the (conditioned) number of searched nodes in the layer and
suppressing the layer index, they are, under
BE and MB respectively:
\begin{align}
P_0 &= \frac{\Omega^{\,\eta-1,K}_{m}}{\Omega^{\,\eta,K}_{m}}
     \quad\text{or}\quad
     \frac{A^{\,\eta-1,K}_{m}}{A^{\,\eta,K}_{m}}
     && \text{(node empty)}, \label{eq:P0}\\
P_K &= \frac{\Omega^{\,\eta-1,K}_{m-K}}{\Omega^{\,\eta,K}_{m}}
     \quad\text{or}\quad
     \binom{m}{K}\frac{A^{\,\eta-1,K}_{m-K}}{A^{\,\eta,K}_{m}}
     && \text{(node exactly $K$)}, \label{eq:PK}\\
F &= \frac{\Omega^{\,\eta-n_i,K}_{m-Kn_i}}{\Omega^{\,\eta,K}_{m}}
     \;\text{or}\;
     c\,\frac{A^{\,\eta-n_i,K}_{m-Kn_i}}{A^{\,\eta,K}_{m}}
     && \text{(saturated)}, \label{eq:F}
\end{align}
where $c=\binom{m}{Kn_i}(Kn_i)!/(K!)^{n_i}$ counts the ways of choosing
and arranging the signatures inside a saturated subtree. Each formula
counts the placements of the remaining signatures in the remaining
capacity. A node is nonempty with probability $1-P_0$, which gives the
\lab{plug-in} recursion for the expected number of searched nodes,
\begin{equation}\label{eq:mhat}
\E\widehat M_i\;\approx\;\E\widehat M_{i-1}\, n_i\,
 \bigl(1-P_0(m_i,\E\widehat M_{i-1}n_i,K)\bigr),
\qquad \widehat M_0=1 ,
\end{equation}
Equation \eqref{eq:mhat} is a mean-field recursion: $P_0$ is nonlinear
in the width, so the exact propagation carries the full law of
$\widehat M_i$ (Remark~\ref{rem:integrality}) rather than its mean.
Below, $P_{i,K}$ denotes \eqref{eq:PK} evaluated at layer $i$.

\begin{remark}[Integrality and the exact alternative]\label{rem:integrality}
Unlike the capacity-one case, where the number of searched subtrees is
the known integer $m_{i-1}$, the width $\widehat M_{i-1}$ is genuinely
random for $K>1$: a node may hold several signatures, so the count
does not determine the number of occupied nodes. Hence \eqref{eq:mhat} propagates an
\emph{expected} width, generally not an integer, whereas
$\Omega^{\,\eta,K}_{m}$ and $A^{\,\eta,K}_{m}$ are defined only for
integer $\eta$ (we evaluate at $\lfloor\eta_i\rceil$); and
$P_{i,K}$, $F_i$ are nonlinear in $\eta_i$, so the plug-in
approximates even before rounding. The exact alternative is \eqref{eq:cond}: propagate the full law of
$\widehat M_{i}$, whose conditional transition
$\prob(\widehat M_i=k\mid \widehat M_{i-1}=r)$ the same generating
functions supply (supplementary material), the marginal following by
summing over $r$. This needs generating-function computations per
layer; Section~\ref{sec:validation} reports a $0.07\%$ plug-in error
in the case examined, so we keep it by default, using the exact route
when $\widehat M_{i-1}$ is small.
\end{remark}

Since $\widehat M_{i-1}$ is random (Remark~\ref{rem:integrality}), we
state the results \emph{conditionally} on its realization and average
afterwards, following \eqref{eq:cond}.

\begin{theorem}[Conditional layer times]\label{th:known-K}
Fix a layer $i$ and condition on $\widehat M_{i-1}=r$, an integer, so
that the layer has $\eta_i(r)=r\,n_i$ searched nodes. Let
$P_{i,K}(r)$ and $F_i(r)$ denote \eqref{eq:PK} and \eqref{eq:F}
evaluated at $\eta_i(r)$ for the chosen convention
(Assumption~\ref{as:BE} or \ref{as:MB}). Then the conditional expected
time of layer $i$ under strategy $S_j$ is
\begin{align}
\tau_{1,i}(r) &= r\,n_i
  \Bigl(1-\frac{P_{i,K}(r)}{K+1}\Bigr)2\Xb , \label{eq:knT1K}\\
\tau_{2,i}(r) &= \Bigl(1-
  \frac{\mathbf{1}\{m_i=\eta_i(r) K\}}{\eta_i(r) K+1}\Bigr)2\Xb ,
  \label{eq:knT2K}\\
\tau_{3,i}(r) &= r
  \Bigl(1-\frac{F_i(r)}{Kn_i+1}\Bigr)2\Xb , \label{eq:knT3K}\\
\tau_{4,i}(r) &= 2 n_i \Xb
  \qquad\qquad\quad (m_i<r\,K), \label{eq:knT4K}\\
\tau_{5}\phantom{_{,i}}\;\; &=
  \Bigl(1-\frac{\prod_{i}\mathbf{1}\{m_i=N_i K\}}
       {1+K\sum_{i} N_i}\Bigr)2\Xb ,
  \;\; N_i=\prod_{j\le i}n_j .
  \label{eq:knT5K}
\end{align}
Given the realized width $r$, \eqref{eq:knT1K}--\eqref{eq:knT3K} are
\lab{exact}: \eqref{eq:knT1K} and \eqref{eq:knT3K} because their
conditional expectations are affine in the number of nodes and of
subtrees, \eqref{eq:knT2K} because the saturated and non-saturated
cases are exhaustive --- and \eqref{eq:knT4K} is
\lab{exact-in-regime}, valid whenever the \emph{realized} width
satisfies $m_i<rK$. Expression \eqref{eq:knT5K} carries no layer index
and no width argument: under the speculative semantics of
Remark~\ref{rem:S5} all $N_i=\prod_{j\le i}n_j$ nodes of layer $i$ are
launched irrespective of occupancy, so $S_5$ is a single stage over the
whole tree and \eqref{eq:knT5K} is \lab{exact}.
\end{theorem}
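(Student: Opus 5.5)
Throughout, we condition on $\widehat M_{i-1}=r$, so layer $i$ has $\eta=rn_i$ searched nodes. Under either convention every capacity-feasible configuration of the $m_i$ signatures over those nodes is equally likely (Lemma~\ref{lem:counts}). The node-level facts come from Lemma~\ref{lem:nodeK} and Lemma~\ref{lem:parallelK}, applied conditionally on occupancy rather than with independent Binomial slots:
\begin{itemize}
\item A node holding exactly $K$ signatures, with positions i.i.d.\ $U[0,s]$, finishes at the maximum of $K$ uniforms, of mean $\tfrac{K}{K+1}2\Xb$.
\item Any node holding fewer than $K$ signatures must scan to the end, taking exactly $2\Xb$.
\end{itemize}
So the first step is to note that positions are independent of the count configuration. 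Then a node's conditional expected time is $2\Xb\bigl(1-\mathbf 1\{\text{node full}\}/(K+1)\bigr)$, and a group of $g$ nodes run in parallel has expected time $2\Xb\bigl(1-\mathbf 1\{\text{all full}\}/(gK+1)\bigr)$, the maximum of $gK$ uniforms.

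For $\tau_{1,i}$ we sum the node expectations over the $\eta$ nodes by linearity. The expectation of each full indicator is the probability that a fixed node holds exactly $K$ signatures. This is $P_{i,K}(r)$ from \eqref{eq:PK}: fix $K$ signatures in that node and count placements of the remaining $m_i-K$ in the other $\eta-1$ nodes. Under MB we also multiply by $\binom{m_i}{K}$ for the choice of labelled signatures.

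For $\tau_{3,i}$ we use the same argument at subtree level. Subtrees run sequentially, so we sum over the $r$ subtrees. The probability that a fixed subtree is saturated is $F_i(r)$ from \eqref{eq:F}. Under MB the factor $c$ counts the choice of the $Kn_i$ signatures and their multinomial split $K$ per node.

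For $\tau_{2,i}$ the whole layer runs in parallel. The time equals $2\Xb$ unless every node is full, which is a deterministic event given $r$, namely $m_i=\eta K$. On that event the time is the maximum of $\eta K$ uniforms.

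$\tau_5$ follows identically with all $N=\sum_i N_i$ launched nodes. By Remark~\ref{rem:S5} there is no width conditioning. The tree finishes early only if every node of every layer is full, i.e.\ $m_i=N_iK$ for all $i$. In that case the time is the maximum of $K\sum_iN_i$ uniforms, which gives the denominator $1+K\sum_i N_i$.

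For $\tau_{4,i}$ the layer time is the maximum over the $r$ subtrees of the sequential subtree time, which is at most $n_i\cdot 2\Xb$. The plan is a pigeonhole argument. If $m_i<rK$, then $m_i$ signatures cannot fill the capacity $K$ of at least one node in each of the $r$ subtrees. Unpacking this, I intend to show that some subtree contains a non-full node, yet that alone does not force the subtree time to be $2n_i\Xb$, since its full nodes still finish early.

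This is the main obstacle: the stated identity needs \emph{every} node in the slowest subtree to be non-full. I will therefore check whether the regime actually forces an entire subtree to contain no full node. The counting is that at most $\lfloor m_i/K\rfloor<r$ nodes can be full, so fewer than $r$ subtrees contain a full node. Hence at least one subtree has all of its nodes non-full, each scanning its whole file deterministically. That subtree takes exactly $2n_i\Xb$, which is the maximum possible, so the layer maximum equals $2n_i\Xb$ surely. Outside the regime this argument fails, which is why the label is exact-in-regime.

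Finally I would remark that the unconditional totals follow by averaging over $r$ via \eqref{eq:cond}.
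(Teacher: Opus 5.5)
Your proposal is correct and follows the same route as the paper's proof sketch: linearity over nodes and subtrees with $P_{i,K}(r)$ and $F_i(r)$ for $\tau_{1,i}$ and $\tau_{3,i}$, the deterministic all-full event $m_i=\eta_i(r)K$ (resp.\ $m_i=N_iK$ for every $i$) for $\tau_{2,i}$ and $\tau_5$, and a pigeonhole argument for $\tau_{4,i}$. Your count that at most $\lfloor m_i/K\rfloor\le r-1$ nodes can be full, so some subtree contains no full node and takes exactly $2n_i\Xb$, is what the paper's one-line claim that ``the signatures cannot supply an exactly-$K$ node to every subtree'' means, and it correctly resolves the obstacle you raised.
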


\begin{corollary}[Totals and their plug-in]\label{cor:uncond}
For $j\in\{1,2,3,4\}$ the total expected search time is the average of
the conditional layer times over the width distribution,
\begin{equation}\label{eq:uncond}
\widehat T^{\mathrm{kn}}_j
 =\sum_{i=1}^{H}\E\bigl[\tau_{j,i}(\widehat M_{i-1})\bigr]
 =\sum_{i=1}^{H}\sum_{r}\prob\bigl(\widehat M_{i-1}=r\bigr)\,
   \tau_{j,i}(r),
\end{equation}
with $\prob(\widehat M_{i}=r)$ available from
Remark~\ref{rem:integrality}, and $\widehat T^{\mathrm{kn}}_5=\tau_5$.
For $j\in\{1,2,3\}$ the averaging \eqref{eq:uncond} is \lab{exact};
for $j=4$ it is \lab{exact-in-regime}, being exact only when the
scarcity condition $m_i<rK$ of \eqref{eq:knT4K} holds for every $r$ of
positive probability, and otherwise leaving the residual mass to the
open case of Remark~\ref{rem:regime}. Evaluating each layer at the
rounded expected width instead of averaging gives the \lab{plug-in}
form
\begin{equation}\label{eq:plugin}
\widehat T^{\mathrm{kn,plug}}_j
 =\sum_{i=1}^{H}\tau_{j,i}\bigl(\lfloor\E\widehat M_{i-1}\rceil\bigr),
\end{equation}
which is what the numerical results of Section~\ref{sec:validation}
implement; it is cheaper than \eqref{eq:uncond} by a factor equal to
the support size of the width distribution.
\end{corollary}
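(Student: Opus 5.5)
The plan is to derive the corollary from Theorem~\ref{th:known-K} together with the tower property, in the same way as the generic decomposition \eqref{eq:cond}. The first step is to write the total time of strategy $S_j$ for $j\le 4$ as a sum of layer times, $T_j=\sum_i L_{j,i}$, since layers are processed one after another. Linearity of expectation gives $\E T_j=\sum_i \E L_{j,i}$. Conditioning on $\widehat M_{i-1}$ then gives $\E L_{j,i}=\E\bigl[\E(L_{j,i}\mid \widehat M_{i-1})\bigr]=\E\bigl[\tau_{j,i}(\widehat M_{i-1})\bigr]$. This is legitimate because the placement of the $m_i$ signatures in layer $i$ depends on the past only through the number $\eta_i=\widehat M_{i-1}n_i$ of searched nodes: under either Assumption~\ref{as:BE} or \ref{as:MB} it is drawn afresh over those nodes. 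Scan positions are independent of everything else.

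The second step expands the outer expectation over the support of the width law. Because $\widehat M_{i-1}$ takes finitely many integer values, this produces the double sum in \eqref{eq:uncond}. The probabilities $\prob(\widehat M_{i-1}=r)$ are obtained by composing the generating-function transitions of Remark~\ref{rem:integrality}. The term $r=0$ contributes nothing, by the convention $\tau_i(0)=0$ of Remark~\ref{rem:jensen}. For $S_5$ there is no width dependence, since all nodes are launched (Remark~\ref{rem:S5}), so $\widehat T^{\mathrm{kn}}_5=\tau_5$ directly.

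The third step assigns the exactness labels. For $j\in\{1,2,3\}$, every $\tau_{j,i}(r)$ is an exact conditional expectation by Theorem~\ref{th:known-K}, and the tower identity is exact, so the average is exact. For $j=4$, formula \eqref{eq:knT4K} equals the conditional expectation only for those $r$ with $m_i<rK$. In that case there are fewer signatures than subtrees can saturate, so some subtree contains an empty node, every subtree takes its full $2n_i\Xb$, and the layer time is $2n_i\Xb$ deterministically. The average is therefore exact exactly when every $r$ with positive probability satisfies this condition. Otherwise I would split the sum into the regime part and a residual carrying mass $\prob(\widehat M_{i-1}K\le m_i)$, which is assigned to the open case. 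I expect the main obstacle to lie in this step: checking the regime claim, and stating the conditional independence of layer-$i$ placements from earlier layers carefully enough that conditioning on $\widehat M_{i-1}$ alone suffices.

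Finally, \eqref{eq:plugin} holds by definition: the random width is replaced by $\lfloor\E\widehat M_{i-1}\rceil$ as in Remark~\ref{rem:jensen}, which is why it carries the plug-in label. Its cost relative to \eqref{eq:uncond} is one evaluation of $\tau_{j,i}$ per layer instead of one per support point, which gives the stated speed-up factor.
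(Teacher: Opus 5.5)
Your overall route matches the paper's. The corollary is \eqref{eq:cond} applied to the conditional layer times of Theorem~\ref{th:known-K}, with the exactness labels inherited from that theorem and the plug-in form holding by definition. Your tower-property step, the treatment of $S_5$, the residual mass $\prob(\widehat M_{i-1}K\le m_i)$ and the cost count are all fine.

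The step that fails is the one you re-derive for $j=4$. You argue that when $m_i<rK$ ``some subtree contains an empty node, every subtree takes its full $2n_i\Xb$''. Under $S_4$ a subtree's time is the \emph{sum} of its $n_i$ node times, so an empty node does not force its subtree to $n_i s$. Neither does any node with fewer than $K$ signatures, because another node of the same subtree may hold exactly $K$ signatures and stop early.

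The claim that every subtree takes $2n_i\Xb$ is also false. With $K=1$, $r=2$, $m_i=1$, the subtree holding the signature finishes in expected time $(2n_i-1)\Xb$; only the other subtree takes $n_i s$. Finally, the threshold is not about saturating subtrees, which would need $Kn_i$ signatures each.

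The correct argument is a pigeonhole on exactly-$K$ nodes:
\begin{itemize}
\item Only a node with exactly $K$ signatures can stop early.
\item $m_i<rK$ permits at most $\lfloor m_i/K\rfloor\le r-1$ such nodes, so at least one of the $r$ subtrees contains none.
\item That subtree scans all $n_i$ files fully and takes exactly $n_i s=2n_i\Xb$, the largest value any subtree can reach.
\item Hence the layer maximum equals $2n_i\Xb$ surely.
\end{itemize}
Conversely, for $m_i\ge rK$ a feasible configuration with an exactly-$K$ node in every subtree has positive probability, so the formula genuinely fails there. This is why the residual mass must be left open.

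Your conclusion is right, and for the corollary you could simply have cited the \lab{exact-in-regime} label of Theorem~\ref{th:known-K}. The justification you actually give, however, does not establish it.
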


\begin{remark}[The regime condition must be checked per realization]
\label{rem:regime}
The scarcity condition of \eqref{eq:knT4K} concerns the realized width:
$m_i<\E[\widehat M_{i-1}]\,K$ does \emph{not} imply $m_i<rK$ for every
$r$ in the support. The exact average \eqref{eq:uncond} checks the condition at every $r$
of positive probability; the plug-in checks only the rounded mean
width, so it may apply \eqref{eq:knT4K} to realizations violating it.
The residual mass needs the capacity-constrained minimum-count
distribution and is left open. In the configurations reported here the condition holds
with probability above $0.99$, but a deployment near the boundary
should evaluate \eqref{eq:uncond} term by term.
\end{remark}

\begin{proof}[Proof sketch (full derivations in the supplementary material)]
\eqref{eq:knT1K}: per-node expectation as in Lemma~\ref{lem:nodeK}
with $p^K$ replaced by $P_{i,K}(r)$, summed over the $\eta_i(r)$ nodes.
\eqref{eq:knT2K} and \eqref{eq:knT5K}: a node finishes before $s$ only
with exactly $K$ signatures; for \emph{all} nodes this forces
$m_i=\eta_i(r)K$, which the capacity constraint admits in one
configuration under either convention, and the time is then the largest
of $\eta_i(r)K$ uniforms. \eqref{eq:knT3K}: a subtree finishes early
only if saturated (probability $F_i(r)$), giving the maximum of $Kn_i$
uniforms; summing over subtrees is affine in the width.
\eqref{eq:knT4K}: when $m_i<rK$ the signatures cannot supply an
exactly-$K$ node to every subtree, so some subtree scans all $n_i$
files fully and the layer time is the deterministic $2n_i\Xb$
(Remark~\ref{rem:regime}).
\end{proof}

\subsection{Unlimited Signatures per File}\label{sec:known-inf}

As $K\to\infty$ the capacity bound disappears and early termination
becomes impossible ($P_{i,K}\to 0$): every searched node scans in full.
The nonempty probability becomes $m/(m+\eta-1)$ under BE and
$1-(1-1/\eta)^{m}$ under MB, \eqref{eq:mhat} applies unchanged, and
Theorem~\ref{th:known-K} reduces to
$\widehat T^{\mathrm{kn}}_1=\sum_i \E[\widehat M_{i-1}]n_i\,2\Xb$,
$\widehat T^{\mathrm{kn}}_2=2H\Xb$,
$\widehat T^{\mathrm{kn}}_3=\sum_i\E[\widehat M_{i-1}]\,2\Xb$,
$\widehat T^{\mathrm{kn}}_4=\sum_i 2n_i\Xb$, and
$\widehat T^{\mathrm{kn}}_5=2\Xb$.

\section{Numerical Results and Monte Carlo Validation}\label{sec:validation}

All closed forms of Sections~\ref{sec:unknown} and \ref{sec:known} were
checked against a discrete-event Monte Carlo simulator that implements
the model of Section~\ref{sec:model} directly: trees are grown layer by
layer under Assumption~\ref{as:inherit}, or, for known counts, under
the applicable placement law (uniform node subsets at capacity one; the
BE or MB convention of Section~\ref{sec:known}-\ref{sec:known-K} for $K>1$); per-node
scan times are drawn from the mixture, and each scenario's completion
time is measured on the sampled tree. Unless noted otherwise we use $10^5$--$10^6$ replications; the
simulator and scripts are available from the authors.

\subsection{Exact Formulas}

Exactness is a property of the derivation, not of a measurement:
simulation can refute a claimed identity but cannot establish one. The
\lab{exact} labels are therefore earned in
Sections~\ref{sec:unknown}--\ref{sec:known}, and simulation serves as
an \emph{implementation check} that analysis and simulator describe the
same process. Under that reading, the
\lab{exact} expressions --- namely $T_1$, $T_3$, $T_5$, the
\lab{exact} members $\widehat T_1$, $\widehat T_3$, $\widehat T_5$ of
Theorem~\ref{th:unknown-K}, and, conditionally on
the realized width, $T^{\mathrm{kn}}_1$, $T^{\mathrm{kn}}_2$,
$T^{\mathrm{kn}}_3$, $T^{\mathrm{kn}}_5$ --- reproduce simulation to
within Monte Carlo noise ($<0.1\%$) in every configuration tested
(e.g., $T^{\mathrm{kn}}_3$: $2.6875$ predicted vs.\ $2.6870$ simulated
at $m_{i-1}=3$, $n_i=3$, $m_i=7$), and \eqref{eq:knT4K} does so on the
realizations satisfying its regime condition. Discrepancies elsewhere
are attributable to the plug-in and asymptotic steps, quantified next.

\subsection{Cost of the Plug-in Step}\label{sec:plugin}

Enumerating the width law exactly and comparing with \eqref{eq:cond}
gives the unknown-count plug-in error directly. For $T_2$ it is
$0.07\%$ in the design example ($pn=5$), $1.1\%$ at the prototype
setting ($pn=3$, where $T_4$ gives $1.9\%$), and $74\%$ in the
subcritical case of Remark~\ref{rem:jensen} ($pn=0.6$). The product
$pn_i$ is the dominant predictor across our sweep but not the only
one, since $\mathrm{Bin}(n_i,p)$ fixes the spread and the extinction
probability and not merely the mean: at fixed $pn_i=1.8$ the error
runs from $1.5\%$ at $(n,p)=(2,0.9)$ to $13.3\%$ at $(18,0.1)$. Read
$pn_i\gtrsim 2$ as a heuristic, with \eqref{eq:cond} available
whenever the margin matters.

The known-count numerics for $K>1$ use the plug-in form
\eqref{eq:plugin} of Corollary~\ref{cor:uncond} --- $\tau_{j,i}$ at
the rounded expected width --- rather than the exact average
\eqref{eq:uncond}. Enumerating the full law of $\widehat M_{i-1}$ in a
configuration small enough to admit it ($K=2$, $n_1=3$, $m_1=4$,
$n_2=2$, $m_2=3$, MB convention) gives an expected width of $2.5$ ---
non-integer, as anticipated --- and a second-layer term $\tau_{1,2}$
of $\widehat T^{\mathrm{kn}}_1$ equal to $4.1083$ under exact
conditioning against $4.1111$ under the plug-in, a $0.07\%$
discrepancy. The gap grows as the width distribution spreads, so
Remark~\ref{rem:integrality} is preferable when $\widehat M_{i-1}$ is
small.

\subsection{The Mixture Variance and the $S_4$ Regime}

Modeling both node types matters most when signatures are sparse:
assigning every node the signature-holding mean $\Xb$ understates the
sequential layer time by $(1-p)/(2-p)$, $33.3\%$ at $p=0.5$ and
approaching $50\%$ as $p\to 0$, whereas the mixture mean $(2-p)\Xb$ of
Lemma~\ref{lem:node} matches simulation to within $0.1\%$ across
$p\in[0.1,1]$.
Figure~\ref{fig:validation}(a) reports the relative error of the $S_4$ layer
formula against simulation over $(n,M)$. With the standard deviation
$\sigma$ of \eqref{eq:Vt} the error decays below $1\%$ once
$n\gtrsim 50$, consistently across $M$, and is one-sided: the
expansion overstates the layer maximum at small $n$, where $\mu(M)$
has not yet settled. Outside the asymptotic regime (small $n$ or $M\lesssim 10$) we
recommend numerical evaluation of the layer maximum, per
Remark~\ref{rem:T4}.

\begin{figure*}[!t]
\centering
\includegraphics[width=\textwidth]{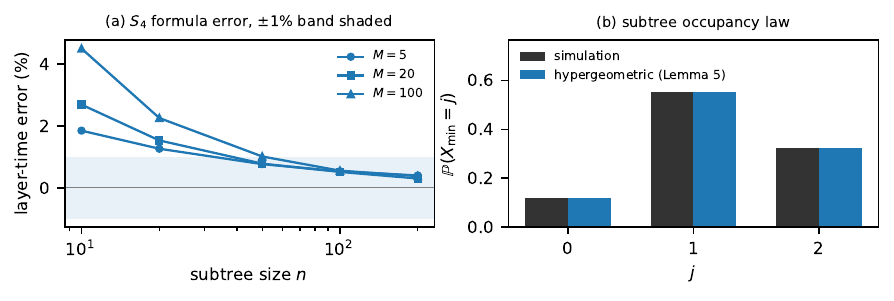}
\caption{(a) Relative error of the \emph{asymptotic} $S_4$ layer-time
formula \eqref{eq:T4} against simulation ($p=0.5$), using the variance
\eqref{eq:Vt}; the shaded band is $\pm1\%$. (b) $\prob(X_{\min}=j)$ under the multivariate
hypergeometric law of Lemma~\ref{lem:xmin} against simulation of
uniform placement \emph{within one layer}: $m_{i-1}=4$ subtrees of
$n_i=8$ nodes, carrying $m_i=10$ signatures between them.}
\label{fig:validation}
\end{figure*}

\subsection{Known-Count Occupancy Laws}\label{sec:occupancy}

Figure~\ref{fig:validation}(b) compares $\prob(X_{\min}=j)$ under the
multivariate hypergeometric law of Lemma~\ref{lem:xmin} and under the
uniform-composition law, against simulation of uniform signature
placement within a single layer of $m_{i-1}=4$ subtrees of $n_i=8$
nodes sharing $m_i=10$ signatures. The hypergeometric
distribution matches simulation pointwise, and the resulting
layer-time estimate \eqref{eq:knownT4} lies within $2\%$ of simulation
(Remark~\ref{rem:A4}). The uniform-composition law that ``stars and
bars'' counting would suggest is not equivalent: it overstates
$\prob(X_{\min}=0)$ by a factor of about five here and the layer time
by $4\%$, a gap that can widen with $m_i$.
The convention-dependent probabilities \eqref{eq:P0}--\eqref{eq:F} were
verified by rejection-sampling the conditioned multinomial law, which
realizes uniform placement directly. At $(\eta,K,m)=(4,2,5)$: $P_0$
formula $0.1500$ against $0.1510$ sampled, $P_K$ $0.4000$ against
$0.4012$, saturation probability $F$ for a two-node subtree $0.1000$
against $0.1008$, and the $S_1$ layer time $3.4667$ against $3.4661$;
agreement is equally close at $(6,3,10)$. The same two configurations
give $P_K=0.4375$ and $0.2839$ under BE --- $9\%$ and $31\%$ away from
their MB counterparts --- confirming that the choice of convention is
quantitatively consequential. The two are not interchangeable
defaults: MB suits signatures arising by independent processes, BE a
feasible allocation drawn uniformly, which connects to the counting
theory of \cite{li2025urn}.

\subsection{Scenario Comparison}

The strategies separate by growth order in the fan-out $n$: the
speedup over $S_1$ is $\Theta(n^{H})$ for $S_2$ and $S_5$, $\Theta(n)$
for $S_3$, and $\Theta(n^{H-1})$ for $S_4$ up to the logarithmic
extreme-value correction --- one power of $n$ short of full
parallelism, the price of scanning each subtree sequentially. Only the
$S_3$/$S_4$ pair has an order that changes with $(H,n)$, since theirs
are the two exponents that can coincide; Section~\ref{sec:design}
locates the crossing.

\section{Applying the Analysis}\label{sec:design}

The closed forms are meant to be used before any file is read. We
illustrate that on a design example, then measure how far the
idealization carries in real execution.

\subsection{A Design Example}\label{sec:design-ex}

An archive of roughly $10^4$ files forms a $4$-layer tree with fan-out
$n=10$; a searched file holds a signature with probability $p=0.5$.
Which parallelism is worth its cost? Table~\ref{tab:design} answers
from Section~\ref{sec:unknown}, using \eqref{eq:t4num} for $S_4$.

\begin{definition}[Reservation cost]\label{def:cost}
Let $P_{\mathrm{peak}}$ be the \emph{expected} peak concurrency --- the
largest expected number of nodes run at once, $\max_i \E(M_{i-1})n_i$
for $S_2$ --- and $T$ the completion time. We report
$C_{\mathrm{res}} = P_{\mathrm{peak}}T$, the cost of reserving that
capacity for the run. It is a \emph{planning} figure at the mean
width, not a worst-case guarantee, and it is not processor-seconds,
which equal $\sum_v \E(t_v)$ --- identical for $S_1$--$S_4$, larger
for $S_5$.

$C_{\mathrm{res}}$ prices reserved capacity but not elapsed time, so
$S_1$ minimizes it by construction: its single processor is busy
throughout, and its reserved rectangle coincides with the work
performed, which no parallel schedule can undercut. A monetary account
would add a term for the time itself --- a deadline penalty, or the
opportunity cost of the reserved window --- and the ordering it
induces can differ from, or invert, the ordering by latency, since
the two extremes differ by a factor of $1170$ in time and $11110$ in
processors. This paper supplies the physical quantities $T$ and
$P_{\mathrm{peak}}$ and their exactness; pricing them into a joint
objective, where the optimum depends on the price ratio and on the
tree shape at once, is a harder problem and is the subject of a
companion paper.
\end{definition}

\begin{table}[!t]
\caption{A priori strategy selection: $H=4$, $n=10$, $p=0.5$.
Times in units of the full-file scan $s$;
$C_{\mathrm{res}}=P_{\mathrm{peak}}T$ per Definition~\ref{def:cost},
computed from unrounded predictions. $S_4$ uses the hybrid rule of
Remark~\ref{rem:T4}, which selects the numerical route
\eqref{eq:t4num} at $n=10$.}
\label{tab:design}
\centering
\begin{tabular}{lrrrr}
\toprule
Strategy & Pred. & Sim. & $P_{\mathrm{peak}}$ & $C_{\mathrm{res}}$ \\
\midrule
$S_1$ sequential        & 1170.0 & 1173.8 & 1      & 1170 \\
$S_2$ layer-parallel    & 4.0    & 4.0    & 1250   & 5000 \\
$S_3$ subtree-parallel  & 156.0  & 156.5  & 10     & 1560 \\
$S_4$ subtrees-parallel & 35.2   & 35.1   & 125    & 4394 \\
$S_5$ all-parallel      & 1.0    & 1.0    & 11110  & 11110 \\
\bottomrule
\end{tabular}
\end{table}

The table brackets the strategies and then separates the one pair that
inspection cannot. $S_1$ and $S_5$ are the extremes: $1170$ time units
on one processor against $1.0$ on $11110$, the latter being the floor
of Remark~\ref{rem:S5}. Their order in \emph{time} is evident, but
fixing them is not idle, because they are the endpoints any cost
objective must interpolate between, and an objective that also prices
processors need not order them the same way --- $S_1$ is the cheapest
schedule in reserved capacity and the most expensive in elapsed time.
$S_2$ shows how quickly the return on parallelism decays: it reaches
$4.0$ using $1250$ processors, so the remaining $3.0$ units down to
the floor cost the other $9860$ processors.

$S_3$ and $S_4$ are the pair whose order is not fixed, and the reason
is that their speedups grow differently in $n$: $\Theta(n)$ for
$S_3$, whose layer time scales with the width $M_{i-1}$ because
subtrees run one after another, against $\Theta(n^{H-1})$ for $S_4$,
whose layer time depends on the width only through the logarithmic
$\mu$. Depth therefore helps $S_4$ and not $S_3$: at $p=0.5$ and
uniform fan-out, $S_3$ returns $4.51$ at $n=6$ for any $H$, while
$S_4$ gains a factor of $n$ per layer. The two cross near $n=8$ for
$H=3$ and below $n=4$ for $H=4$, so the configuration here puts $S_4$ ahead at
$33.2$ against $7.5$ while the shallower tree of
Table~\ref{tab:proto} reverses it, $4.5$ against $3.8$; $S_3$'s
invariance in $H$ is itself an artifact of the uniform fan-out, since
its speedup is in general a width-weighted average of per-layer
ratios. Deciding this pair is what the $S_4$ derivation of
Section~\ref{sec:unknown} is for.
Simulation confirms every prediction to within $0.5\%$, including
$S_4$, for which the hybrid rule selects the numerical route
\eqref{eq:t4num} at this fan-out.

\subsection{A Multicore Prototype}\label{sec:prototype}

The analysis idealizes execution: processors are unbounded and dispatch
is free. To see how far that carries, we ran the five strategies as
real concurrent programs on an Apple M1
(4 performance and 4 efficiency cores, macOS 26.2, Python 3.11.3,
\texttt{multiprocessing} with \texttt{fork}). Each strategy is a
dispatch schedule executed by operating-system processes over real
files, so the measured times include process creation, scheduling and
synchronization costs that the model does not predict.
Each node owns a $1$\,MB file scanned in 64\,KB chunks, with early
termination at the chunk holding the signature, so the work is
proportional to the bytes examined; the supplementary material gives
the implementation details. We report medians
and interquartile ranges over $8$ independent trees grown with a
common fan-out ($H=3$, $n_i=n=6$, $p=0.5$), each repeated twice, with
the $S_1$ baseline measured within the same tree. Layer widths vary across realizations; the tasks and
rounds in Table~\ref{tab:proto}, and the $63.2$\,MB scanned per tree
quoted below, are medians, the median tree having widths $1,3,9$ and
thus $78$ searched nodes.

What the prototype does and does not test. All five strategies run on
the \emph{same} tree realization, so their total work is identical by
construction (median $63.2$\,MB). This isolates execution order, the
worker cap and synchronization, but it also means the prototype's
$S_5$ is not the speculative variant of Remark~\ref{rem:S5}. With that
and the synthetic payloads, the measurements speak to scheduling
behavior alone: not to the extra work speculation would incur, nor to
what a real corpus would cost.

\begin{table}[!t]
\caption{Multicore prototype: measured speedup over $S_1$ (median
[IQR] over 8 trees) against the analytical prediction. ``Tasks'' is
the number of units dispatched per run; ``rounds'' is the number of
synchronization barriers.}
\label{tab:proto}
\centering
\footnotesize
\setlength{\tabcolsep}{3pt}
\begin{tabular}{@{}lrlrlrrr@{}}
\toprule
& \multicolumn{2}{c}{$W=8$} & \multicolumn{2}{c}{$W=4$} &
  & & theory \\
\cmidrule(lr){2-3}\cmidrule(lr){4-5}
Strategy & med. & [IQR] & med. & [IQR] & tasks & rounds & unbdd. \\
\midrule
$S_2$ & 3.69 & [3.32, 4.02] & 3.07 & [3.01, 3.17] & 78 & \phantom{0}3 & 19.5 \\
$S_3$ & 3.03 & [2.61, 3.23] & 2.59 & [2.53, 2.70] & 78 & 13 & \phantom{0}4.5 \\
$S_4$ & 3.02 & [2.42, 3.33] & 2.58 & [2.36, 2.73] & 13 & \phantom{0}3 & \phantom{0}3.8 \\
$S_5$ & 4.30 & [3.58, 4.66] & 3.45 & [3.34, 3.57] & 78 & \phantom{0}1 & 58.5 \\
\bottomrule
\end{tabular}
\end{table}

Three observations follow from Table~\ref{tab:proto}. First, absolute
speedups collapse towards the worker count: all strategies do identical
total work, so a work-conserving execution on $W$ processors cannot
beat $W$: for reference, the unbounded $19.5\times$ and $58.5\times$
for $S_2$ and $S_5$ are measured as $3.7\times$ and $4.3\times$ on
eight workers. Doubling
$W$ from four to eight adds only $17$--$25\%$, the additional
concurrency having to engage the efficiency cores: on a heterogeneous
machine the core count overstates the parallelism available. At $W=8$
the binding limit also differs by strategy: $8$ for $S_2$ and $S_5$,
but $4.5$ and $3.8$ for $S_3$ and $S_4$, capped by their own
structure.

Second, the coarse separation is reproduced --- $S_5$, then $S_2$,
then the subtree-level pair --- as predicted at that granularity.
Third --- and not predicted by the model --- measurement does not
separate $S_3$ and $S_4$ at all. This tree sits just below the
crossing of Section~\ref{sec:design}-\ref{sec:design-ex}, so the
formulas already place them close, $S_3$ ahead by $20\%$ ($4.5$ vs.\
$3.8$); measurement closes even that gap, to within $0.01$ at both
worker counts, with heavily overlapping interquartile ranges and a
median difference that varies between repetitions of the same
configuration ($0.08$ vs.\ $0.01$ at $W=4$; supplementary material).
This is consistent with the task
column: $S_3$ issues one dispatch round per
subtree ($13$ barriers over $78$ tasks) whereas $S_4$ issues one task
per subtree and synchronizes once per layer, so when per-node work is
small relative to dispatch cost the barrier count can offset the
advantage the latency model assigns; we did not isolate barrier cost
directly. Strategies this close in predicted latency should be
separated by their synchronization structure, not by the formulas.

A bounded-pool analysis ($\lceil\nu/W\rceil$ rounds per layer, with the
extreme-value argument inside a round) is the natural next step.

\subsection{Cost of the Analysis}

Both information conditions are cheap to evaluate, though not equally
so: $O(H)$ arithmetic operations for unknown counts, independent of
tree size, against polynomial-convolution computations per layer for
known counts, which grow with the signature counts and are the binding cost
when $m_i$ is large (supplementary material). Every quantity in
Table~\ref{tab:design} was computed in under a millisecond, against
minutes for the corresponding Monte Carlo estimate.

The margin matters less for planning than for revision. A scheduler
that re-evaluates as each layer completes knows the realized width
$M_{i-1}$ and can use the conditional $\tau_i(M_{i-1})$ directly,
which removes the width plug-in and its error. Every other
qualification stands: $\tau_{4}$ still carries its CLT/EVT
approximation or the grid error of \eqref{eq:t4num}.

\section{Discussion}\label{sec:discussion}

\subsection{Approximations and Their Ranges}

Expressions not listed below are exact or exact-in-regime under the
stated model. Four approximations are used, all quantified in
Section~\ref{sec:validation}. \emph{A1}, the plug-in
$\E g(M)\approx g(\E M)$ of Remark~\ref{rem:jensen}, affects $T_2$,
$T_4$ and their $\le K$ analogues, where $pn_i$ predicts its size well
enough to guide use: under $2\%$ on the trees reported here but $74\%$
at $pn_i=0.6$, with \eqref{eq:cond} available instead. In the
known-count totals of Corollary~\ref{cor:uncond} the same step draws
on the spread of the capacity-constrained width and is far smaller
($0.07\%$). \emph{A2} and \emph{A3}, the extreme-value
expansion $\mu(M)$ and the CLT for subtree sums, enter the $S_4$
formulas only jointly; the combined approximation stays within about
$1\%$ for $M\gtrsim 10$ and $n_i\gtrsim 50$, and \eqref{eq:t4num}
replaces both outside that regime. \emph{A4}, identifying the layer
maximum with the minimum-count subtree in \eqref{eq:knownT4}, is a
lower bound within about $2\%$.

\subsection{Limitations and Extensions}

The model assumes homogeneous subtrees, uniform signature positions,
the inheritance property of Assumption~\ref{as:inherit}, no
cancellation (Assumption~\ref{as:nocancel}), and unbounded processors.
The last is the most consequential: with $W$ workers the attainable
speedup is capped at $W$ and eroded by dispatch overhead, which the
prototype shows can leave two strategies indistinguishable that the
formulas separate by $20\%$
(Section~\ref{sec:design}-\ref{sec:prototype}). One regime is left
open: on widths $r$ with $m_i\ge rK$ the known-count $S_4$ layer time
needs a capacity-constrained minimum-count distribution we do not
derive. Evaluating \eqref{eq:uncond} term by term handles the
regime-satisfying widths exactly and isolates the rest. Heterogeneous
fan-outs
$n_{i,j}$ leave $T_1$/$T_3$-type results intact (linearity) but
complicate the extreme-value analysis; non-uniform position
distributions are handled by the general-CDF forms of
Section~\ref{sec:unknown}-\ref{sec:general}, under which sequential
stages depend on the discovery-time law only through its first two
moments while parallel stages depend on the full CDF. Beyond time, the \emph{monetary} cost
of reserving processors for tree search --- and the design of tree
shapes that minimize a joint time--cost objective --- builds directly on
the expressions derived here and is treated in a companion paper.

\section{Conclusion}\label{sec:conclusion}

We derived a priori computable expected search times for signature
searching in multi-level tree networks under five search strategies, two
information conditions (known and unknown signature counts), and three
per-file capacity classes. The analysis rests on a mixture model of
node scan times, order statistics for parallel stages, and, for known
counts, the multivariate hypergeometric law of subtree occupancy.
Formulas were checked against Monte Carlo simulation and classified as
exact, exact-in-regime, plug-in, asymptotic, or bounded, with
quantified accuracy ranges for the latter three.

The framework thus delivers a prediction available before any file is
read, together with a statement of how far to trust it: the design
example of Section~\ref{sec:design} is answered in under a millisecond
to within $0.5\%$ of simulation. Extending the analysis to a bounded
processor pool and closing the open capacity-constrained regime are the
natural next steps; the expressions derived here also supply the
temporal building blocks for cost-optimal tree design, treated in a
companion paper.


\section*{Acknowledgment}

The authors used Anthropic Claude to draft and edit the English prose
and grammar. All results and conclusions are the authors' own, who
verified every derivation, number and reference and take full
responsibility.


\end{document}